\newcommand{\eps}{\varepsilon}
\newcommand{\set}[1]{\left\{#1\right\}}
\newcommand{\p}{\partial}
\newcommand{\me}{\mathbf{e}}
\newcommand{\mf}{\mathbf{f}}
\newcommand{\mn}{\mathbf{n}}
\newcommand{\mt}{\mathbf{t}}
\newcommand{\mx}{\mathbf{x}}
\newcommand{\mz}{\mathbf{z}}
\newcommand{\mO}{\mathbf{O}}
\newcommand{\mU}{\mathbf{U}}
\newcommand{\mV}{\mathbf{V}}
\newcommand{\vt}{\boldsymbol{\theta}}
\newcommand{\vv}{\boldsymbol{\vartheta}}
\newcommand{\vp}{\boldsymbol{\varphi}}
\newcommand{\vx}{\boldsymbol{\xi}}
\newtheorem{thm}{Theorem}[section]
\newtheorem{lem}[thm]{Lemma}
\newtheorem{rem}[thm]{Remark}
\begin{document}

\title{Asymptotic properties of MUSIC-type imaging in two-dimensional inverse scattering from thin electromagnetic inclusions}

\author{Won-Kwang Park\thanks{e-mail: parkwk@kookmin.ac.kr}}
\affil{Department of Mathematics, Kookmin University, Seoul, 136-702, Korea.}

\date{}
\maketitle

\begin{abstract}
The main purpose of this paper is to study the structure of the well-known non-iterative MUltiple SIgnal Classification (MUSIC) algorithm for identifying the shape of extended electromagnetic inclusions of small thickness located in a two-dimensional homogeneous space. We construct a relationship between the MUSIC-type imaging functional for thin inclusions and the Bessel function of integer order of the first kind. Our construction is based on the structure of the left singular vectors of the collected multistatic response matrix whose elements are the measured far-field pattern and the asymptotic expansion formula in the presence of thin inclusions. Some numerical examples are shown to support the constructed MUSIC structure.
\end{abstract}

\section{Introduction}
One of the goals of the inverse scattering problem is to identify unknown properties (e.g., the shapes, material properties, locations, and constitutions) of electromagnetic targets from measured scattered field data. This problem is generally solved by Newton-type iteration schemes or level-set method involving minimization of the difference between the measured scattered data and the computed data by generating an admissible cost functional. Related works can be found in \cite{AGJKLY,BHL,B,CR,DL,GH2,K,M2,PL4,VXB,SZ} and references therein. To execute these schemes, the Fr{\'e}chet derivative must be evaluated at each iteration step, so the computational costs are large. Unfortunately, non-convergence or the appearance of several minima arises in the iteration procedure owing to the non-convex nature of the cost functional. Furthermore, \textit{a priori} information on unknown targets is essential to guarantee a successful reconstruction, and even when the above conditions are fulfilled, reconstruction will fail if the iteration process is begun with a bad initial guess. Hence, generating a good initial guess close to the expected conditions is a priority. For this purpose, alternative non-iterative imaging algorithms such as the linear sampling method, single- and multi-frequency based Kirchhoff and subspace migrations, and the topological derivative strategy have been developed.

In pioneering research \cite{D}, the MUltiple SIgnal Classification (MUSIC) algorithm has been investigated to find the locations of point-like scatterers. It was recently applied to various problems, for example, detection of antipersonnel mines buried in the ground \cite{AIL}, searching for the locations of small inclusions \cite{AILP,CZ,GH1,HM,SCC,ZC}, identifying internal corrosion in a pipeline \cite{AKKLV}, and shape reconstruction of arbitrarily shaped thin inclusions, cracks, and extended targets \cite{AGKPS,AKLP,HSZ,JP,PL1,PL3}. On the basis of these results, the locations of small inclusions can be accurately identified using MUSIC, but owing to the intrinsic resolution limit, the complete shape of extended targets cannot be imaged. Hence, the obtained results were based on good initial guesses, and iteration-based algorithms such as the level-set method were successfully executed (see
\cite{AGJKLY,BHL,PL4}).

Although the MUSIC algorithm offers good results for small and extended targets, a detailed structural analysis must be attempted because some phenomena cannot be explained using the traditional approach, for example, the unexpected appearance of artifacts or of two curves along the boundary of targets instead of the true shape (see \cite[Figure 9(b)]{HSZ}), or an image with poor resolution (see \cite[Section 4.4]{PL3}). Numerical results in existing works motivate us to explore some properties of the MUSIC-type algorithm for imaging the arbitrarily shaped thin penetrable electromagnetic inclusions and perfectly conducting cracks considered in \cite{PL1,PL3}. Our exploration is based on the rigorously derived asymptotic expansion formula in the presence of a thin inclusion \cite{BF} and physical factorization of the so-called multistatic response (MSR) matrix \cite{HSZ}. With this, we will establish a relationship between the MUSIC-type imaging functional and the Bessel function of integer order of the first kind, and identify its properties.

This paper is organized as follows. In section \ref{sec:2}, we introduce two-dimensional direct scattering problems, the asymptotic expansion formula in the presence of thin inclusions, and the MUSIC-type algorithm for imaging thin electromagnetic inclusions. In section \ref{sec:3}, we identify the structure of a MUSIC-type functional focused on imaging of thin penetrable electromagnetic inclusions by constructing a relationship with the Bessel function of integer order of the first kind, and discuss its properties. In section \ref{sec:4},  numerical simulation results are presented to support the identified structure. This paper ends with a short conclusion in section \ref{sec:5}.

\section{Direct scattering problems and MUSIC algorithm}\label{sec:2}
\subsection{Two-dimensional direct scattering problems and asymptotic expansion formula}
Suppose that an extended electromagnetic inclusion $\Gamma$ with a small (with respect to the given wavelength) thickness $2h$ is located in the two-dimensional homogeneous space $\mathbb{R}^2$. We assume that the shape of $\Gamma$ is characterized by the supporting smooth curve $\gamma$ such that (see FIG. \ref{ThinInclusion})
\[\Gamma=\{\mx+\eta\mn(\mx):\mx\in\gamma,~\eta\in(-h,h)\}.\]

\begin{figure}[!ht]
  \begin{center}
  \includegraphics[width=0.35\textwidth]{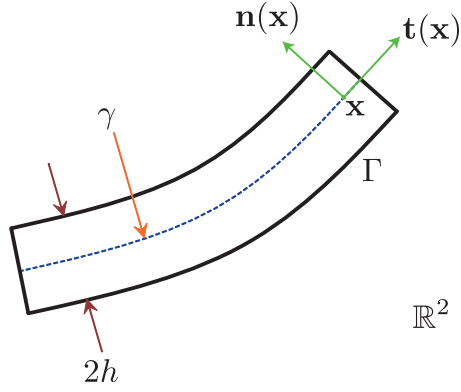}
  \caption{\label{ThinInclusion}Sketch of two-dimensional thin electromagnetic inclusion $\Gamma$.}
  \end{center}
\end{figure}

Throughout this paper, we assume that $\Gamma$ and $\mathbb{R}^2$ are classified by their dielectric permittivity and magnetic permeability at a given frequency $\omega=2\pi/\lambda$, where $\lambda$ denotes the wavelength. Let $0<\eps_0<+\infty$ and $0<\mu_0<+\infty$ denote the permittivity and permeability of $\mathbb{R}^2$, respectively; analogously, $0<\eps<+\infty$ and $0<\mu<+\infty$ represent those of $\Gamma$. Then, we can define the piecewise constant dielectric permittivity $\varepsilon(\mx)$ and magnetic permeability $\mu(\mx)$,
\begin{equation}\label{EPSMU}
\varepsilon(\mx)=\left\{\begin{array}{ccl}
\varepsilon_0&\mbox{for}&\mx\in\mathbb{R}^2\backslash\overline{\Gamma}\\
\varepsilon&\mbox{for}&\mx\in\Gamma
\end{array}\right.
\quad\mbox{and}\quad
\mu(\mx)=\left\{\begin{array}{ccl}
\mu_0&\mbox{for}&\mx\in\mathbb{R}^2\backslash\overline{\Gamma}\\
\mu&\mbox{for}&\mx\in\Gamma,
\end{array}\right.
\end{equation}
respectively. For convenience, we set $\eps_0=\mu_0=1$, $\eps>\eps_0$, and $\mu>\mu_0$.

For a given fixed frequency $\omega$ (we assume that the wave number $k=\omega\sqrt{\eps_0\mu_0}=\omega$), we let
\begin{equation}\label{IncidentField}
  u_0(\mx,\vt;\omega):=e^{i\omega\vt\cdot\mx},\quad\mx\in\mathbb{R}^2
\end{equation}
be a plane-wave incident field with the incident direction $\vt\in\mathbb{S}^1$, where $\mathbb{S}^1$ denotes the unit circle. Let $u(\mx,\vt;\omega)=u_0(\mx,\vt;\omega)+u_s(\mx,\vt;\omega)$ denote the time-harmonic total field that satisfies the following Helmholtz equation:
\begin{equation}\label{TotalField}
  \nabla\cdot\bigg(\frac{1}{\mu(\mx)}\nabla u(\mx,\vt;\omega)\bigg) +\omega^2\eps(\mx)u(\mx,\vt;\omega)=0,
\end{equation}
with transmission conditions on the boundary of $\Gamma$. Here, $u_s(\mx,\vt;\omega)$ denotes the unknown scattered field that satisfies the Sommerfeld radiation condition
\[\lim_{r\to0}\sqrt{r}\bigg(\frac{\p u_s(\mx,\vt;\omega)}{\p r}-i\omega u_s(\mx,\vt;\omega)\bigg)=0,\quad\vv=\frac{\mx}{|\mx|}\in\mathbb{S}^1\]
uniformly in all directions $\vv$. Notice that the above radiation condition implies the asymptotic behavior
\[u_s(\mx,\vt;\omega)=\frac{e^{i\omega|\mx|}}{\sqrt{|\mx|}}u_{\infty}(\vv,\vt;\omega)+o\left(\frac{1}{\sqrt{|\mx|}}\right)\quad\mbox{for all}\quad|\mx|\longrightarrow+\infty.\]
Then, according to \cite{BF}, the far-field pattern $u_{\infty}(\vv,\vt;\omega)$ can be written using the following asymptotic expansion formula:
\[u_{\infty}(\vv,\vt;\omega)=h\frac{\omega^2(1+i)}{4\sqrt{\omega\pi}} \int_\gamma\bigg((-\vv)\cdot\mathbb{M}(\mx)\cdot\vt+(\eps-1)\bigg)e^{i\omega(\vt-\vv)\cdot\mx}d\gamma(\mx)+o(h).\]
Here, the $2\times2$ symmetric matrix $\mathbb{M}(\mx)$ is defined as follows: for $\mx\in\gamma$, let $\mt(\mx)$ and $\mn(\mx)$ denote the unit tangent and normal vectors to $\gamma$ at $\mx$, respectively. Then
\begin{itemize}
  \item $\mathbb{M}(\mx)$ has eigenvectors $\mt(\mx)$ and $\mn(\mx)$.
  \item The eigenvalue corresponding to $\mt(\mx)$ is $2\left(\frac{1}{\mu}-1\right)$.
  \item The eigenvalue corresponding to $\mn(\mx)$ is $2\left(1-\mu\right)$.
\end{itemize}

\subsection{MUSIC-type imaging algorithm}
Next, we introduce the MUSIC algorithm for imaging $\Gamma$. For simplicity, suppose that we have $N$ incident and observation directions $\vt_l$ and $\vv_j$, respectively, for $j,l=1,2,\cdots,N$, and the incident and observation directions are the same, i.e., $\vv_j=-\vt_j$. In this paper, we consider the full-view inverse problem. Hence, we assume that $\set{\vt_n:n=1,2,\cdots,N}$ spans unit circle $\mathbb{S}^1$. Moreover, we assume that the supporting curve $\gamma$ is divided into $M$ different segments with sizes on the order of half the wavelength, $\lambda/2$. Then, keeping in mind the Rayleigh resolution limit from the far-field data, any detail less than one-half of the wavelength in size cannot be seen, and only one point at each segment is expected to contribute to the image space of the response matrix $\mathbb{K}$ (see \cite{AKLP,HSZ,PL1,PL3}, for instance). Each of these points, say $\mx_j$ for $j=1,2,\cdots,M$, can be imaged by MUSIC-type imaging. With this in mind, we consider the collected MSR matrix such that
\begin{equation}\label{MSR}
\mathbb{K}=\left[K_{jl}\right]_{j,l=1}^{N}=\left[\begin{array}{cccc}
u_{\infty}(\vv_1,\vt_1;\omega) & u_{\infty}(\vv_1,\vt_2;\omega) & \cdots & u_{\infty}(\vv_1,\vt_N;\omega)\\
u_{\infty}(\vv_2,\vt_1;\omega) & u_{\infty}(\vv_2,\vt_2;\omega) & \cdots & u_{\infty}(\vv_2,\vt_N;\omega)\\
\vdots&\vdots&\ddots&\vdots\\
u_{\infty}(\vv_N,\vt_1;\omega) & u_{\infty}(\vv_N,\vt_2;\omega) & \cdots & u_{\infty}(\vv_N,\vt_N;\omega)\\
\end{array}\right].
\end{equation}
Because the incident and observation directions are the same, $K_{jl}$ becomes
\begin{align}
\begin{aligned}\label{StructureofMSRmatrix}
  K_{jl}=&u_{\infty}(-\vt_j,\vt_l;\omega)\\
  &\approx h\frac{\omega^2(1+i)}{4\sqrt{\omega\pi}} \int_\gamma\bigg(\vt_j\cdot\mathbb{M}(\mx)\cdot\vt_l+(\eps-1)\bigg)e^{i\omega(\vt_j+\vt_l)\cdot\mx}d\gamma(\mx)\\
  =&h\frac{\omega^2(1+i)}{4\sqrt{\omega\pi}}\frac{\mbox{length}(\gamma)}{M} \sum_{m=1}^{M}\bigg[2\left(\frac{1}{\mu}-1\right)\vt_j\cdot\mt(\mx_m)\vt_l\cdot\mt(\mx_m)\\
  &+2(1-\mu)\vt_j\cdot\mn(\mx_m)\vt_l\cdot\mn(\mx_m)+(\eps-1)\bigg]e^{i\omega(\vt_j+\vt_l)\cdot\mx}d\gamma(\mx),
\end{aligned}
\end{align}
where $\mbox{length}(\gamma)$ denotes the length of $\gamma$ (refer to \cite{PL3}).

With this background, the MUSIC algorithm can be introduced as follows. Let us perform the singular value decomposition of $\mathbb{K}$:
\[\mathbb{K}=\mathbb{USV}^*\approx\sum_{m=1}^M\sigma_m\mU_m\mV_m^*,\]
where $\mU_m$ and $\mV_m$ are the left and right singular vectors of $\mathbb{K}$, respectively, and $\sigma_m$ denotes the non-zero singular values. Then, on the basis of the structure of the MSR matrix (\ref{StructureofMSRmatrix}), we define a vector $\mf(\mz)\in\mathbb{C}^{N\times1}$ as
\begin{equation}\label{Vecf}
  \mf(\mz)=\bigg[\mathbf{c}_1\cdot[1,\vt_1]^Te^{i\omega\vt_1\cdot\mz}, \mathbf{c}_2\cdot[1,\vt_2]^Te^{i\omega\vt_2\cdot\mz},\cdots, \mathbf{c}_N\cdot[1,\vt_N]^Te^{i\omega\vt_N\cdot\mz}\bigg]^T,
\end{equation}
where the selection of $\mathbf{c}_n\in\mathbb{R}^3\backslash\set{\mathbf{0}}$, $n=1,2,\cdots,N$, depends on the shape of the supporting curve $\gamma(\mx)$. In fact, this is a linear combination of $\mt(\mx_m)$ and $\mn(\mx_m)$.

Let us define a projection operator onto the noise subspace:
\[P_{\mathrm{noise}}(\mf(\mz)):=\left(\mathbb{I}_{N}-\sum_{m=1}^{M}\mU_m\mU_m^*\right)\mf(\mz).\]
Then, the MUSIC-type imaging functional can be introduced:
\begin{equation}\label{MUSICfunction}
  \mathbb{E}(\mz):=\frac{1}{|P_{\mathrm{noise}}(\mf(\mz))|}.
\end{equation}
Note that $\mathbb{K}$ is symmetric, but it is not self-adjoint. Therefore, we form a self-adjoint matrix
\[\mathbb{A}=\mathbb{K^*K}=\mathbb{\overline{K}K},\]
where $*$ denotes the adjoint, and the bar denotes the complex conjugate. Then, with a careful choice of $\mathbf{c}_n$, the range of $\mathbb{A}$ is spanned by the $M$ vectors $\set{\mf(\mx_1),\mf(\mx_2),\cdots,\mf(\mx_M)}$ (see \cite{AK,C} for instance). Therefore,
\[\mf(\mz)\in\mbox{Range}(\mathbb{\overline{K}K})\quad\mbox{if and only if}\quad\mz\in\set{\mx_1,\mx_2,\cdots,\mx_M};\]
i.e., equivalently $|P_{\mathrm{noise}}(\mf(\mz))|=0$. Thus, the map of (\ref{MUSICfunction}) will show a large magnitude (theoretically, $+\infty$) at $\mx_m\in\gamma$.

\begin{rem}
  The results of the numerical simulations in \cite{P1,PL1,PL3} indicate that the selection of $\mathbf{c}_n$ is a strong prerequisite. The selection depends on the shape of the supporting curve $\gamma$. For purely dielectric contrast, $\mathbf{c}_n=[1,0,0]^T$ is a good choice. However, for purely magnetic contrast, $\mathbf{c}_n$ must be in the form $\mathbf{c}_n=[0,\mathbf{b}]^T$, where $\mathbf{b}$ is a linear combination of $\mt(\mx_m)$ and $\mn(\mx_m)$ for $m=1,2,\cdots,M$. Unfortunately, we have no \textit{a priori} information on the shape of $\gamma$. Therefore, in \cite{HSZ,PL1}, a  large number of directions are applied to find an optimal vector $\mathbf{b}$. Applying this $\mathbf{b}$ yields a good result, but this process incurs large computational costs. Hence, motivated by recent work \cite{HSZ}, we assume that $\mathbf{c}_n$ satisfies $\mathbf{c}_n\cdot[1,\vt_n]^T=1$ for all $n$, i.e.,
  \begin{equation}\label{VecF}
    \mf(\mz)=\bigg[e^{i\omega\vt_1\cdot\mz},e^{i\omega\vt_2\cdot\mz},\cdots,e^{i\omega\vt_N\cdot\mz}\bigg]^T,
  \end{equation}
  and explore some properties of the MUSIC-type imaging algorithm.
\end{rem}

\section{Structure of certain properties of MUSIC-type imaging function}\label{sec:3}
In this section, we identify the structure of the MUSIC-type imaging function. Before starting, we recall a useful result derived in \cite{G}. This plays a key role in our identification of the structure.
\begin{lem}\label{TheoremBessel}
  Assume that $\set{\vt_n:n=1,2,\cdots,N}$ spans unit circle $\mathbb{S}^1$. Then, the following identities hold for sufficiently large $N$ and $\vx,\mx\in\mathbb{R}^2$.
\begin{align*}
  &\frac{1}{N}\sum_{n=1}^{N}e^{i\omega\vt_n\cdot\mx}=\frac{1}{2\pi}\int_{\mathbb{S}^1}e^{i\omega\vt\cdot\mx}dS(\vt)=J_0(\omega|\mx|),\\
  &\frac{1}{N}\sum_{n=1}^{N}(\vt_n\cdot\vx)e^{i\omega\vt_n\cdot\mx}=\frac{1}{2\pi}\int_{\mathbb{S}^1}(\vt\cdot\vx)e^{i\omega\vt\cdot\mx}dS(\vt)=i\left(\frac{\mx}{|\mx|}\cdot\vx\right)J_1(\omega|\mx|),
\end{align*}
where $J_p$ denotes the Bessel function of integer order $p$ of the first kind.
\end{lem}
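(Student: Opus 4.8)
The plan is to treat each identity as two separate equalities: the discrete-to-continuous passage $\frac1N\sum_{n}\to\frac{1}{2\pi}\int_{\mathbb{S}^1}$, and then the closed-form evaluation of the resulting integral as a Bessel function. The first equality is a Riemann-sum (quadrature) statement. Since $\set{\vt_n}$ is assumed to span $\mathbb{S}^1$ and the sampling directions become asymptotically uniformly distributed, the normalized sum over the $N$ directions converges to the normalized arc-length integral over the circle (of total length $2\pi$) as $N\to\infty$; this is exactly where the hypothesis ``for sufficiently large $N$'' enters. Because the integrands $e^{i\omega\vt\cdot\mx}$ and $(\vt\cdot\vx)e^{i\omega\vt\cdot\mx}$ are smooth and bounded on the compact set $\mathbb{S}^1$, the convergence is uniform on compact sets in $\mx$, and I would quote \cite{G} for the quantitative form.

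For the first integral I would parametrize $\vt=(\cos\phi,\sin\phi)$ and write $\mx=|\mx|(\cos\alpha,\sin\alpha)$, so that $\vt\cdot\mx=|\mx|\cos(\phi-\alpha)$. Using the $2\pi$-periodicity of the integrand to shift away the fixed angle $\alpha$, the integral reduces to $\frac{1}{2\pi}\int_0^{2\pi}e^{i\omega|\mx|\cos\psi}\,d\psi$, which is precisely the classical integral representation of $J_0(\omega|\mx|)$. This settles the first identity.

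For the second identity, rather than evaluating the vector integral directly, I would differentiate the first identity with respect to $\mx$. Since $\nabla_{\mx}e^{i\omega\vt\cdot\mx}=i\omega\vt\,e^{i\omega\vt\cdot\mx}$, differentiation under the integral sign (justified by smoothness and boundedness on $\mathbb{S}^1$) gives
\[
\frac{1}{2\pi}\int_{\mathbb{S}^1}i\omega\vt\,e^{i\omega\vt\cdot\mx}\,dS(\vt)=\nabla_{\mx}J_0(\omega|\mx|)=\omega J_0'(\omega|\mx|)\frac{\mx}{|\mx|}.
\]
Invoking the standard Bessel relation $J_0'=-J_1$ and dividing by $i\omega$ (noting $-1/i=i$) yields
\[
\frac{1}{2\pi}\int_{\mathbb{S}^1}\vt\,e^{i\omega\vt\cdot\mx}\,dS(\vt)=i\,J_1(\omega|\mx|)\frac{\mx}{|\mx|}.
\]
Taking the inner product with $\vx$ then produces the claimed value $i\big(\frac{\mx}{|\mx|}\cdot\vx\big)J_1(\omega|\mx|)$.

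The main obstacle is not the integral evaluations, which are classical, but making the first (sum-to-integral) equality precise: one must argue that the directions $\set{\vt_n}$ equidistribute on $\mathbb{S}^1$ so that the normalized sum is a genuinely convergent Riemann sum for the arc-length measure. Since the lemma is only an asymptotic statement, I would treat this passage as an approximation controlled by the modulus of continuity of the smooth integrands and defer the quantitative estimate to \cite{G}. Once this quadrature step is granted, the remaining ingredients—the $J_0$ integral representation and the differentiation-under-the-integral argument combined with $J_0'=-J_1$—are routine.
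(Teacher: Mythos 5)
Your proof is correct. Note, however, that the paper does not actually prove this lemma: it is recalled as a known result from \cite{G} and used as a black box, so there is no in-paper argument to compare against --- your proposal supplies the details the paper omits. Your decomposition into (a) a sum-to-integral quadrature step and (b) a closed-form evaluation of the circle integral is the natural one. For (a), you are right that ``spans $\mathbb{S}^1$'' must be read as ``is asymptotically equidistributed on $\mathbb{S}^1$''; in the paper's numerical section the directions are the equally spaced points $\vt_n=[\cos(2n\pi/N),\sin(2n\pi/N)]^T$, for which the normalized sum is the trapezoidal rule for a smooth periodic integrand and converges rapidly, so deferring the quantitative estimate to \cite{G} is reasonable. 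For (b), your reduction of the first integral to $\frac{1}{2\pi}\int_0^{2\pi}e^{i\omega|\mx|\cos\psi}\,d\psi=J_0(\omega|\mx|)$ is the classical representation (equivalently, integrate the Jacobi--Anger expansion term by term), and your derivation of the second identity by differentiating the first under the integral sign and invoking $J_0'=-J_1$ is a clean alternative to the equally standard route of expanding $\vt\cdot\vx$ in the angle and picking out the $n=\pm1$ Fourier modes; both give $i\left(\frac{\mx}{|\mx|}\cdot\vx\right)J_1(\omega|\mx|)$. One cosmetic caveat: the gradient computation is valid only for $\mx\neq\mathbf{0}$, and the stated right-hand side is formally undefined at $\mx=\mathbf{0}$; since $J_1(0)=0$ and the integral vanishes there by symmetry, both sides extend by continuity to $0$, so nothing is lost.
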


\subsection{Pure dielectric permittivity contrast case: $\eps\ne\eps_0$ and $\mu=\mu_0$}
First, we consider the dielectric permittivity contrast case; i.e., we assume that $\eps\ne\eps_0$ and $\mu=\mu_0$. The proof is similar to the result in \cite[Theorem 3.3]{JP}.
\begin{thm}[Pure dielectric permittivity contrast case]\label{Theorem1}
  For sufficiently large $N$ ($>M$) and $\omega$, (\ref{MUSICfunction}) can be written as follows:
  \begin{equation}\label{MUSIC1}
    \mathbb{E}_{\eps}(\mz):=\frac{1}{|P_{\mathrm{ noise}}(\mf(\mz))|}\approx\frac{1}{\sqrt{N}}\left(1-\sum_{m=1}^{M}J_0(\omega|\mz-\mx_m|)^2+o(h)\right)^{-1/2}.
  \end{equation}
\end{thm}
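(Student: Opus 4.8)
The plan is to exploit the fact that, when $\mu=\mu_0=1$, the polarization matrix $\mathbb{M}(\mx)$ vanishes identically, since both of its eigenvalues $2(1/\mu-1)$ and $2(1-\mu)$ are zero. Thus the MSR entries in (\ref{StructureofMSRmatrix}) collapse to the purely dielectric contribution $K_{jl}\approx C\sum_{m=1}^M e^{i\omega\vt_j\cdot\mx_m}e^{i\omega\vt_l\cdot\mx_m}$ with $C=h\frac{\omega^2(1+i)}{4\sqrt{\omega\pi}}\frac{\mathrm{length}(\gamma)}{M}(\eps-1)$. Writing $\mf(\mx_m)$ in the reduced form (\ref{VecF}), I would first record the rank-$M$ factorization $\mathbb{K}\approx C\sum_{m=1}^M\mf(\mx_m)\mf(\mx_m)^T$, valid up to an $o(h)$ error inherited from the asymptotic expansion. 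Consequently $\mathrm{Range}(\mathbb{K})=\mathrm{span}\{\mf(\mx_1),\ldots,\mf(\mx_M)\}$, so the left singular vectors $\{\mU_1,\ldots,\mU_M\}$ associated with the nonzero singular values form an orthonormal basis of this same subspace.

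The decisive step is to apply the first identity of Lemma \ref{TheoremBessel} to the Gram matrix of the $\mf(\mx_m)$. One computes $\mf(\mx_m)^*\mf(\mx_{m'})=\sum_{n=1}^N e^{i\omega\vt_n\cdot(\mx_{m'}-\mx_m)}=N\,J_0(\omega|\mx_{m'}-\mx_m|)$. Because the points $\mx_m$ are separated on the scale $\lambda/2$ and $\omega$ is large, the off-diagonal values $J_0(\omega|\mx_{m'}-\mx_m|)$ are negligible while the diagonal equals $J_0(0)=1$. Hence $\{\mf(\mx_m)/\sqrt N\}$ is (approximately) orthonormal, and the signal-subspace projection may be written as $\sum_{m=1}^M\mU_m\mU_m^*\approx\frac{1}{N}\sum_{m=1}^M\mf(\mx_m)\mf(\mx_m)^*$.

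It then remains to evaluate the noise projection. Since $P_{\mathrm{noise}}=\mathbb{I}_N-\sum_m\mU_m\mU_m^*$ is an orthogonal projection, I would use
\[|P_{\mathrm{noise}}(\mf(\mz))|^2=|\mf(\mz)|^2-\sum_{m=1}^M\frac{|\mf(\mx_m)^*\mf(\mz)|^2}{N}.\]
Here $|\mf(\mz)|^2=N$ because every entry of $\mf(\mz)$ has unit modulus, and a second application of the first Bessel identity gives $\mf(\mx_m)^*\mf(\mz)=N\,J_0(\omega|\mz-\mx_m|)$, whence $|\mf(\mx_m)^*\mf(\mz)|^2/N=N\,J_0(\omega|\mz-\mx_m|)^2$. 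Substituting yields $|P_{\mathrm{noise}}(\mf(\mz))|^2\approx N\bigl(1-\sum_{m=1}^M J_0(\omega|\mz-\mx_m|)^2\bigr)$, and taking the reciprocal of the square root produces exactly (\ref{MUSIC1}), the $o(h)$ term being propagated from the expansion that feeds $\mathbb{K}$.

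The main obstacle I anticipate is the rigorous justification of the approximate-orthogonality replacement $\sum_m\mU_m\mU_m^*\approx N^{-1}\sum_m\mf(\mx_m)\mf(\mx_m)^*$: this is precisely where the hypotheses \emph{sufficiently large $N$ and $\omega$} are consumed, since one must argue that the Gram matrix $\bigl[J_0(\omega|\mx_m-\mx_{m'}|)\bigr]_{m,m'}$ is close to the identity, so that the orthonormal $\mU_m$ differ from the normalized $\mf(\mx_m)/\sqrt N$ only by a controllable perturbation. Once the off-diagonal Bessel terms are accepted as negligible, the remaining work reduces to the two routine Bessel evaluations carried out above.
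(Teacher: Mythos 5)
Your proposal is correct and follows essentially the same route as the paper: identify the left singular vectors as $\mU_m\approx\mf(\mx_m)/\sqrt{N}+\mO(h)$, use the first identity of Lemma \ref{TheoremBessel} both for the approximate orthonormality of the $\mf(\mx_m)/\sqrt{N}$ and for the inner products $\mf(\mx_m)^*\mf(\mz)=N J_0(\omega|\mz-\mx_m|)$, and then evaluate $|P_{\mathrm{noise}}(\mf(\mz))|$. The only difference is cosmetic: you invoke the Pythagorean identity $|P_{\mathrm{noise}}\mf(\mz)|^2=|\mf(\mz)|^2-\sum_m|\mU_m^*\mf(\mz)|^2$ to shortcut the computation, whereas the paper expands the noise projection entrywise and cancels the cross terms ($\Phi_1$ against $\Phi_2$) explicitly, arriving at the same $N\bigl(1-\sum_{m}J_0(\omega|\mz-\mx_m|)^2+o(h)\bigr)$.
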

\begin{proof}
In this case, if we let $\mu=\mu_0$ in (\ref{StructureofMSRmatrix}), the left singular vectors are of the form
\[\mU_m\approx\frac{1}{\sqrt{N}}\bigg[e^{i\omega\vt_1\cdot \mx_m},e^{i\omega\vt_2\cdot \mx_m},\cdots,e^{i\omega\vt_N\cdot \mx_m}\bigg]^T+\mO(h),\]
where $\mO(h)$ denotes a $N\times1$ vector whose elements are $o(h)$. Then, for sufficiently large $\omega$, we can observe that
\begin{equation}\label{orthogonal1}
  \mU_m\cdot\overline{\mU}_{m'}\approx\frac{1}{N}\sum_{n=1}^{N}e^{i\omega\vt_n\cdot(\mx_m-\mx_{m'})}\approx J_0(\omega|\mx_m-\mx_{m'}|)+o(h)\approx\left\{\begin{array}{ccl}
1&\mbox{if}&m=m'\\
\noalign{\medskip}0&\mbox{if}&m\ne m'.
\end{array}\right.
\end{equation}

Following elementary calculus, $P_{\mathrm{noise}}$ can be written as
\begin{align*}
  P_{\mathrm{noise}}(\mf(\mz))&=\left(\mathbb{I}_{N}-\sum_{m=1}^{M}\mU_m\overline{\mU}_m^T\right)\mf(\mz)\\
  &=\left[\begin{array}{c}
        e^{i\omega\vt_1\cdot\mz} \\
        e^{i\omega\vt_2\cdot\mz} \\
        \vdots \\
        e^{i\omega\vt_N\cdot\mz} \\
      \end{array}\right]
  -\frac{1}{N}\sum_{m=1}^{M}\left[\begin{array}{c}
      \displaystyle e^{i\omega\vt_1\cdot\mz}+\sum_{n\in\mathbb{N}_1}e^{i\omega\vt_1\cdot \mx_m}e^{i\omega\vt_n\cdot(\mz-\mx_m)}+o(h)\\
      \displaystyle e^{i\omega\vt_2\cdot\mz}+\sum_{n\in\mathbb{N}_2}e^{i\omega\vt_1\cdot \mx_m}e^{i\omega\vt_n\cdot(\mz-\mx_m)}+o(h)\\
      \vdots \\
      \displaystyle e^{i\omega\vt_N\cdot\mz}+\sum_{n\in\mathbb{N}_N}e^{i\omega\vt_1\cdot \mx_m}e^{i\omega\vt_n\cdot(\mz-\mx_m)}+o(h)
      \end{array}\right],
\end{align*}
where $\mathbb{N}_n=\{1,2,\cdots,N\}\backslash\{n\}$. Because
\[e^{i\omega\vt_n\cdot\mz}=e^{i\omega\vt_n\cdot \mx_m}e^{i\omega\vt_n\cdot (\mz-\mx_m)},\] $P_{\mathrm{noise}}$ can be expressed as
\[P_{\mathrm{noise}}(\mf(\mz))=\left[
      \begin{array}{c}
        \displaystyle e^{i\omega\vt_1\cdot\mz}-\sum_{m=1}^{M}e^{i\omega\vt_1\cdot \mx_m}J_0(\omega|\mz-\mx_m|)+o(h) \\
        \displaystyle e^{i\omega\vt_2\cdot\mz}-\sum_{m=1}^{M}e^{i\omega\vt_2\cdot \mx_m}J_0(\omega|\mz-\mx_m|)+o(h) \\
        \vdots \\
        \displaystyle e^{i\omega\vt_N\cdot\mz}-\sum_{m=1}^{M}e^{i\omega\vt_N\cdot \mx_m}J_0(\omega|\mz-\mx_m|)+o(h) \\
      \end{array}
    \right].\]
Therefore, we can obtain
\[|P_{\mathrm{noise}}(\mf(\mz))|=\bigg(P_{\mathrm{noise}}(\mf(\mz))\cdot\overline{P_{\mathrm{noise}}(\mf(\mz))}\bigg)^{1/2} =\left(\sum_{n=1}^{N}\bigg(1-\Phi_1+\Phi_2+o(h)\bigg)\right)^{1/2},\]
where
\begin{align*}
  \Phi_1&=\sum_{m=1}^{M}\bigg(e^{i\omega\vt_n\cdot(\mz-\mx_m)}+e^{-i\omega\vt_n\cdot(\mz-\mx_m)}\bigg)J_0(\omega|\mz-\mx_m|)\\
  \Phi_2&=\left(\sum_{m=1}^{M}e^{i\omega\vt_n\cdot\mx_m}J_0(\omega|\mz-\mx_m|)\right) \left(\sum_{m=1}^{M}e^{-i\omega\vt_n\cdot\mx_m}J_0(\omega|\mz-\mx_m|)\right).
\end{align*}
Because
\begin{align}
\begin{aligned}\label{term1}
  \sum_{n=1}^{N}\Phi_1&=\sum_{n=1}^{N}\sum_{m=1}^{M}\bigg(e^{i\omega\vt_n\cdot(\mz-\mx_m)}+e^{-i\omega\vt_n\cdot(\mz-\mx_m)}\bigg)J_0(\omega|\mz-\mx_m|)\\ &=2N\sum_{m=1}^{M}J_0(\omega|\mz-\mx_m|)^2,
\end{aligned}
\end{align}
and on the basis of the orthogonal property (\ref{orthogonal1}), we can evaluate
\begin{align}
\begin{aligned}\label{term2}
  \sum_{n=1}^{N}\Phi_2&=\left(\sum_{m=1}^{M}e^{i\omega\vt_n\cdot\mx_m}J_0(\omega|\mz-\mx_m|)\right) \left(\sum_{m=1}^{M}e^{-i\omega\vt_n\cdot\mx_m}J_0(\omega|\mz-\mx_m|)\right)\\
  &=\sum_{n=1}^{N}\sum_{m=1}^{M}\sum_{m'=1}^{M}\bigg(e^{-i\omega\vt_n\cdot\mx_m}J_0(\omega|\mz-\mx_m|)\bigg) \bigg(e^{i\omega\vt_n\cdot\mx_{m'}}J_0(\omega|\mz-\mx_{m'}|)\bigg)\\
  &=\sum_{m=1}^{M}\sum_{m'=1}^{M}\sum_{n=1}^{N}\bigg(e^{i\omega\vt_n\cdot(\mx_{m'}-\mx_m)}J_0(\omega|\mz-\mx_m|)J_0(\omega|\mz-\mx_{m'}|)\bigg)\\
  &=N\sum_{m=1}^{M}\sum_{m'=1}^{M}J_0(\omega|\mx_{m'}-\mx_m|)J_0(\omega|\mz-\mx_m|)J_0(\omega|\mz-\mx_{m'}|)\\
  &=N\sum_{m=1}^{M}J_0(\omega|\mz-\mx_m|)^2.
\end{aligned}
\end{align}
Therefore, using (\ref{term1}) and (\ref{term2}), we can obtain the following result:
\[|P_{\mathrm{noise}}(\mf(\mz))|=\sqrt{N}\left(1-\sum_{m=1}^{M}J_0(\omega|\mz-\mx_m|)^2+o(h)\right)^{1/2}.\]
\end{proof}

Note that $J_0(x)$ has its maximum value $1$ at $x=0$. This means that plots of $\mathbb{E}_{\eps}(\mz)$ will show peaks of large ($+\infty$ in theory) and small magnitude at $\mx_m\in\gamma$ and at $\mx\notin\gamma$, respectively (see FIG. \ref{PlotMUSIC1}). This is why the MUSIC algorithm offers a good result for the pure dielectric contrast case of the full-view inverse scattering problem. We refer to FIG. \ref{MapMUSIC1} and various results in \cite{AKLP,HSZ,PL1,PL3}.

\subsection{Pure magnetic permeability contrast case: $\eps=\eps_0$ and $\mu\ne\mu_0$}
Next, we consider the magnetic permeability contrast case; i.e., we assume that $\eps=\eps_0$ and $\mu\ne\mu_0$. The result is as follows.
\begin{thm}[Pure magnetic permeability contrast case]\label{Theorem2}
  For sufficiently large $N$ ($>2M$) and $\omega$, (\ref{MUSICfunction}) can be written as follows:
  \begin{equation}\label{MUSIC2}
    \mathbb{E}_{\mu}(\mz)=\frac{1}{\sqrt{N}}\left(1-\sum_{m=1}^{M}\left(\frac{\mz-\mx_m}{|\mz-\mx_m|}\cdot(\mt(\mx_m)+\mn(\mx_m))\right)^2J_1(\omega|\mz-\mx_m|)^2+o(h)\right)^{-1/2}.
  \end{equation}
\end{thm}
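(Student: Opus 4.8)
The plan is to follow the proof of Theorem~\ref{Theorem1} line by line, with the first identity of Lemma~\ref{TheoremBessel} (which produced $J_0$) replaced by the second one (which produces $J_1$). First I would set $\eps=\eps_0=1$ in (\ref{StructureofMSRmatrix}); the constant $(\eps-1)$ term then vanishes and the $(j,l)$ entry of $\mathbb{K}$ reduces to
\[K_{jl}\approx h\frac{\omega^2(1+i)}{4\sqrt{\omega\pi}}\frac{\mbox{length}(\gamma)}{M}\sum_{m=1}^{M}\left[2\left(\frac{1}{\mu}-1\right)(\vt_j\cdot\mt(\mx_m))(\vt_l\cdot\mt(\mx_m))+2(1-\mu)(\vt_j\cdot\mn(\mx_m))(\vt_l\cdot\mn(\mx_m))\right]e^{i\omega(\vt_j+\vt_l)\cdot\mx_m}.\]
Splitting $e^{i\omega(\vt_j+\vt_l)\cdot\mx_m}=e^{i\omega\vt_j\cdot\mx_m}e^{i\omega\vt_l\cdot\mx_m}$ exhibits $\mathbb{K}$ as a symmetric factorization whose left factors, for each $\mx_m$, are the two vectors $[(\vt_n\cdot\mt(\mx_m))e^{i\omega\vt_n\cdot\mx_m}]_{n}$ and $[(\vt_n\cdot\mn(\mx_m))e^{i\omega\vt_n\cdot\mx_m}]_{n}$; this is exactly why each sampling point now injects \emph{two} directions into the range of $\mathbb{K}$, and hence why the hypothesis reads $N>2M$ rather than $N>M$.

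Writing $\md_m:=\mt(\mx_m)+\mn(\mx_m)$, the next step is to read off the normalized left singular vectors. Since $\mt(\mx_m)$ and $\mn(\mx_m)$ are orthonormal in $\mathbb{R}^2$, for equidistributed $\vt_n$ one has $\sum_{n=1}^N(\vt_n\cdot\mt(\mx_m))^2\approx\sum_{n=1}^N(\vt_n\cdot\mn(\mx_m))^2\approx N/2$ while the cross sum $\sum_{n=1}^N(\vt_n\cdot\mt(\mx_m))(\vt_n\cdot\mn(\mx_m))\approx0$, so $\sum_{n=1}^N(\vt_n\cdot\md_m)^2\approx N$ and the singular vector attached to $\mx_m$ is
\[\mU_m\approx\frac{1}{\sqrt{N}}\left[(\vt_1\cdot\md_m)e^{i\omega\vt_1\cdot\mx_m},(\vt_2\cdot\md_m)e^{i\omega\vt_2\cdot\mx_m},\cdots,(\vt_N\cdot\md_m)e^{i\omega\vt_N\cdot\mx_m}\right]^T+\mO(h).\]
I would then verify the analogue of (\ref{orthogonal1}), namely $\mU_m\cdot\overline{\mU}_{m'}\approx\delta_{mm'}$, by expanding the resulting sums through Lemma~\ref{TheoremBessel} and invoking the decay of $J_0,J_1,J_2$ at the large arguments $\omega|\mx_m-\mx_{m'}|\ge\omega\lambda/2=\pi$ furnished by the half-wavelength separation of the $\mx_m$ together with the largeness of $\omega$.

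With the signal subspace fixed, I would compute $P_{\mathrm{noise}}(\mf(\mz))=\mf(\mz)-\sum_{m=1}^M\mU_m\overline{\mU}_m^T\mf(\mz)$ for the simplified test vector (\ref{VecF}), abbreviating $\beta_m:=\frac{\mz-\mx_m}{|\mz-\mx_m|}\cdot\md_m$. The one genuinely new ingredient is the second identity of Lemma~\ref{TheoremBessel}, which gives
\[\overline{\mU}_m^T\mf(\mz)=\frac{1}{\sqrt{N}}\sum_{n=1}^N(\vt_n\cdot\md_m)e^{i\omega\vt_n\cdot(\mz-\mx_m)}\approx i\sqrt{N}\,\beta_m\,J_1(\omega|\mz-\mx_m|),\]
so that $P_{\mathrm{noise}}(\mf(\mz))$ has the same component structure as in the dielectric case with $J_0(\omega|\mz-\mx_m|)$ replaced by $\beta_m J_1(\omega|\mz-\mx_m|)$. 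Squaring and summing splits $|P_{\mathrm{noise}}(\mf(\mz))|^2$ into the two pieces playing the roles of (\ref{term1}) and (\ref{term2}): the $\mf$--signal cross term contributes $2N\sum_{m}\beta_m^2J_1(\omega|\mz-\mx_m|)^2$, and the signal--signal term contributes $N\sum_{m}\beta_m^2J_1(\omega|\mz-\mx_m|)^2$ after the off-diagonal ($m\ne m'$) contributions are discarded via the orthogonality above. Combining them gives $|P_{\mathrm{noise}}(\mf(\mz))|^2=N(1-\sum_m\beta_m^2J_1(\omega|\mz-\mx_m|)^2+o(h))$, which is (\ref{MUSIC2}) since $\beta_m=\frac{\mz-\mx_m}{|\mz-\mx_m|}\cdot(\mt(\mx_m)+\mn(\mx_m))$.

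The hard part, I expect, is the second step, the identification of the singular vectors. In the dielectric case a single scalar exponential per point sufficed, whereas here the eigenvalues $2(1/\mu-1)$ and $2(1-\mu)$ attached to $\mt(\mx_m)$ and $\mn(\mx_m)$ are distinct, so the tangential and normal parts enter the range with different weights; justifying that the combination controlling $P_{\mathrm{noise}}(\mf(\mz))$ is the single direction $\md_m$ with normalization $1/\sqrt N$, and that this remains compatible with the approximate orthonormality (\ref{orthogonal1}) needed to suppress the cross terms, is the delicate point. Everything downstream, though bulkier than in Theorem~\ref{Theorem1} because of the extra factors $\vt_n\cdot\md_m$, is routine once the second Bessel identity is in hand.
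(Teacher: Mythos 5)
Your outline follows the paper's strategy in spirit (identify the left singular vectors from the factorization, apply the second identity of Lemma~\ref{TheoremBessel}, split $|P_{\mathrm{noise}}(\mf(\mz))|^2$ into a cross term and a signal--signal term), but it diverges from the paper at exactly the step you yourself flag as delicate, and the divergence is a genuine gap rather than a legitimate alternative route. The paper does \emph{not} attach a single singular vector along $\md_m=\mt(\mx_m)+\mn(\mx_m)$ to each sampling point; it keeps \emph{two} vectors per point,
\[
\mU_{2(m-1)+s}\approx\frac{1}{\sqrt{N}}\Big[(\vt_1\cdot\vx_s(\mx_m))e^{i\omega\vt_1\cdot\mx_m},\cdots,(\vt_N\cdot\vx_s(\mx_m))e^{i\omega\vt_N\cdot\mx_m}\Big]^T,\qquad \vx_1=\mt(\mx_m),\ \vx_2=\mn(\mx_m),
\]
and carries the double sum $\sum_{m=1}^{M}\sum_{s=1}^{2}$ through the entire computation of $P_{\mathrm{noise}}$, $\Psi_1$ and $\Psi_2$, only assembling the combination $\mt(\mx_m)+\mn(\mx_m)$ at the very end. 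This is forced by the factorization you correctly write down in your first paragraph: per point the MSR matrix contributes $a\,\mathbf{v}^{(t)}_m(\mathbf{v}^{(t)}_m)^T+b\,\mathbf{v}^{(n)}_m(\mathbf{v}^{(n)}_m)^T$ with $a=2(1/\mu-1)\neq b=2(1-\mu)$, which is a rank-\emph{two} operator and cannot be rewritten as $c\,\mathbf{w}_m\mathbf{w}_m^T$ with $\mathbf{w}_m=\mathbf{v}^{(t)}_m+\mathbf{v}^{(n)}_m$ (that would require the cross terms $\mathbf{v}^{(t)}_m(\mathbf{v}^{(n)}_m)^T$, which the factorization does not produce). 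Your own justification of the hypothesis $N>2M$ --- ``each sampling point injects two directions into the range'' --- is inconsistent with then using only $M$ singular vectors.

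The inconsistency is not cosmetic: since $\mathbf{v}^{(t)}_m$ and $\mathbf{v}^{(n)}_m$ are approximately orthogonal, the squared overlap of $\mf(\mz)$ with the two-dimensional signal block at $\mx_m$ is proportional to $\big((\hat{\mathbf{r}}_m\cdot\mt(\mx_m))^2+(\hat{\mathbf{r}}_m\cdot\mn(\mx_m))^2\big)J_1(\omega|\mz-\mx_m|)^2$ with $\hat{\mathbf{r}}_m=(\mz-\mx_m)/|\mz-\mx_m|$, whereas the overlap with the one-dimensional span of $\mathbf{w}_m$ is proportional to $\big(\hat{\mathbf{r}}_m\cdot(\mt(\mx_m)+\mn(\mx_m))\big)^2J_1(\omega|\mz-\mx_m|)^2$; these differ by the cross term $2(\hat{\mathbf{r}}_m\cdot\mt(\mx_m))(\hat{\mathbf{r}}_m\cdot\mn(\mx_m))$, so projecting onto the smaller subspace changes the value of $|P_{\mathrm{noise}}(\mf(\mz))|$ at generic $\mz$. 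That your reduced computation lands exactly on the displayed formula (\ref{MUSIC2}) is an artifact of the collapse, not a derivation of it; to match the paper you must retain the $2M$-dimensional signal subspace, compute $\sum_{n}(\Psi_1-\overline{\Psi}_1)$ and $\sum_n\Psi_2\overline{\Psi}_2$ with the index $s$ present (using $\frac1N\sum_n(\vt_n\cdot\vx_s)^2=\frac{1}{2\pi}\int_{\mathbb{S}^1}(\vp\cdot\vx_s)^2dS(\vp)=\frac12$ as the paper does), and only then pass to the combined direction. The downstream Bessel-identity manipulations in your last two paragraphs are otherwise the same as the paper's and would go through once the subspace is fixed correctly.
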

\begin{proof}
In this case, the left singular vectors are of the form (see \cite{AGKPS,PL3})
\[\mU_{2(m-1)+s}\approx\frac{1}{\sqrt{N}}\bigg[\vt_1\cdot\vx_s(\mx_m)e^{i\omega\vt_1\cdot\mx_m},
\cdots,\vt_N\cdot\vx_s(\mx_m)e^{i\omega\vt_N\cdot\mx_m}\bigg]^T+\mO(h),\]
where
\[\vx_s(\mx_m):=\left\{\begin{array}{rcl}
                         \mt(\mx_m) & \mbox{if} & s=1 \\
                         \noalign{\medskip}\mn(\mx_m) & \mbox{if} & s=2.
                       \end{array}
\right.\]
Then, based on the orthonormal property of singular vectors, we can observe the following: because $\omega$ is sufficiently large, if $m\ne m'$ or $s\ne s''$, then
\begin{multline}\label{orthogonal2}
  \mU_{2(m-1)+s}\cdot\overline{\mU}_{2(m'-1)+s''}\approx\frac{1}{N} \sum_{n=1}^{N}\bigg(\vt_n\cdot(\vx_s(\mx_m)+\vx_{s''}(\mx_{m'})\bigg) e^{i\omega\vt_n\cdot(\mx_m-\mx_{m'})}\\
  \approx i\frac{\mx_m-\mx_{m'}}{|\mx_m-\mx_{m'}|}\cdot\bigg(\vx_s(\mx_m)+\vx_{s''}(\mx_{m'})\bigg)J_1(\omega|\mx_m-\mx_{m'}|)+o(h)\approx0.
\end{multline}

Because we selected $\mf(\mz)$ as (\ref{VecF}), $P_{\mathrm{noise}}$ can be written as
\begin{align*}
  &P_{\mathrm{noise}}(\mf(\mz)) =\left(\mathbb{I}_{N}-\sum_{m=1}^{M}\sum_{s=1}^{2}\mU_{2(m-1)+s}\overline{\mU}_{2(m-1)+s}^T\right)\mf(\mz)\\
  &\approx\left[\begin{array}{c}
        e^{i\omega\vt_1\cdot\mz} \\
        e^{i\omega\vt_2\cdot\mz} \\
        \vdots \\
        e^{i\omega\vt_N\cdot\mz} \\
      \end{array}\right]-
      \frac{1}{N}\sum_{m=1}^{M}\sum_{s=1}^{2}\left[\begin{array}{c}
      \displaystyle(\vt_1\cdot\vx_s(\mx_m))e^{i\omega\vt_1\cdot\mx_m}\sum_{n=1}^{N}(\vt_n\cdot\vx_s(\mx_m))e^{i\omega\vt_n\cdot(\mz-\mx_m)}+o(h)\\
      \displaystyle(\vt_2\cdot\vx_s(\mx_m))e^{i\omega\vt_2\cdot\mx_m}\sum_{n=1}^{N}(\vt_n\cdot\vx_s(\mx_m))e^{i\omega\vt_n\cdot(\mz-\mx_m)}+o(h)\\
      \vdots\\
      \displaystyle(\vt_N\cdot\vx_s(\mx_m))e^{i\omega\vt_N\cdot\mx_m}\sum_{n=1}^{N}(\vt_n\cdot\vx_s(\mx_m))e^{i\omega\vt_n\cdot(\mz-\mx_m)}+o(h)
      \end{array}\right]\\
      &=\left[\begin{array}{c}
      \displaystyle e^{i\omega\vt_1\cdot\mz}-i\sum_{m=1}^{M}\sum_{s=1}^{2}(\vt_1\cdot\vx_s(\mx_m))\left(\frac{\mz-\mx_m}{|\mz-\mx_m|}\cdot\vx_s(\mx_m)\right)e^{i\omega\vt_1\cdot \mx_m}J_1(\omega|\mz-\mx_m|)+o(h)\\
      \displaystyle e^{i\omega\vt_2\cdot\mz}-i\sum_{m=1}^{M}\sum_{s=1}^{2}(\vt_2\cdot\vx_s(\mx_m))\left(\frac{\mz-\mx_m}{|\mz-\mx_m|}\cdot\vx_s(\mx_m)\right)e^{i\omega\vt_2\cdot \mx_m}J_1(\omega|\mz-\mx_m|)+o(h)\\
      \vdots\\
      \displaystyle e^{i\omega\vt_N\cdot\mz}-i\sum_{m=1}^{M}\sum_{s=1}^{2}(\vt_N\cdot\vx_s(\mx_m))\left(\frac{\mz-\mx_m}{|\mz-\mx_m|}\cdot\vx_s(\mx_m)\right)e^{i\omega\vt_N\cdot \mx_m}J_1(\omega|\mz-\mx_m|)+o(h)
      \end{array}\right].
\end{align*}
Hence,
\[|P_{\mathrm{noise}}(\mf(\mz))|=\left(\sum_{n=1}^{N}\bigg(1+\Psi_1-\overline{\Psi}_1+\Psi_2\overline{\Psi}_2+o(h)\bigg)\right)^{1/2},\]
where
\begin{align*}
  \Psi_1&=i\sum_{m=1}^{M}\sum_{s=1}^{2}(\vt_n\cdot\vx_s(\mx_m))\left(\frac{\mz-\mx_m}{|\mz-\mx_m|}\cdot\vx_s(\mx_m)\right) e^{i\omega\vt_n\cdot(\mz-\mx_m)}J_1(\omega|\mz-\mx_m|)\\
  \Psi_2&=\sum_{m=1}^{M}\sum_{s=1}^{2}(\vt_n\cdot\vx_s(\mx_m))\left(\frac{\mz-\mx_m}{|\mz-\mx_m|}\cdot\vx_s(\mx_m)\right)e^{i\omega\vt_n\cdot \mx_m}J_1(\omega|\mz-\mx_m|).
\end{align*}

Because
\begin{align*}
  &\sum_{n=1}^{N}(\Psi_1-\overline{\Psi}_1)\\
  =&i\sum_{n=1}^{N}\sum_{m=1}^{M}\sum_{s=1}^{2}(\vt_n\cdot\vx_s(\mx_m))\left(\frac{\mz-\mx_m}{|\mz-\mx_m|}\cdot\vx_s(\mx_m)\right) e^{i\omega\vt_n\cdot(\mz-\mx_m)}J_1(\omega|\mz-\mx_m|)\\
  &+i\sum_{n=1}^{N}\sum_{m=1}^{M}\sum_{s=1}^{2}(\vt_n\cdot\vx_s(\mx_m))\left(\frac{\mz-\mx_m}{|\mz-\mx_m|}\cdot\vx_s(\mx_m)\right) e^{-i\omega\vt_n\cdot(\mz-\mx_m)}J_1(\omega|\mz-\mx_m|)\\
  =&\sum_{m=1}^{M}\sum_{s=1}^{2}\left(\frac{\mz-\mx_m}{|\mz-\mx_m|}\cdot\vx_s(\mx_m)\right)\sum_{n=1}^{N}\bigg(i(\vt_n\cdot\vx_s(\mx_m))e^{i\omega\vt_n\cdot(\mz-\mx_m)}\bigg)J_1(\omega|\mz-\mx_m|)\\
  &+\sum_{m=1}^{M}\sum_{s=1}^{2}\left(\frac{\mz-\mx_m}{|\mz-\mx_m|}\cdot\vx_s(\mx_m)\right)\sum_{n=1}^{N}\bigg(i(\vt_n\cdot\vx_s(\mx_m))e^{-i\omega\vt_n\cdot(\mz-\mx_m)}\bigg)J_1(\omega|\mz-\mx_m|),
\end{align*}
applying Theorem \ref{TheoremBessel}, we can obtain
\begin{align}
  \begin{aligned}\label{term3}
    \sum_{n=1}^{N}(\Psi_1-\overline{\Psi}_1)&=-2N\sum_{m=1}^{M}\sum_{s=1}^{2}\bigg(\frac{\mz-\mx_m}{|\mz-\mx_m|}\cdot\vx_s(\mx_m)\bigg)^2J_1(\omega|\mz-\mx_m|)^2\\
  &=-2N\sum_{m=1}^{M}\bigg(\frac{\mz-\mx_m}{|\mz-\mx_m|}\cdot\bigg(\mt(\mx_m)+\mn(\mx_m)\bigg)\bigg)^2J_1(\omega|\mz-\mx_m|)^2.
  \end{aligned}
\end{align}

Next, on the basis of the orthogonal property (\ref{orthogonal2}), we can evaluate
\begin{align*}
  \sum_{n=1}^{N}\Psi_2\overline{\Psi}_2=&\sum_{n=1}^{N}\left(\sum_{m=1}^{M}\sum_{s=1}^{2}(\vt_n\cdot\vx_s(\mx_m))\left(\frac{\mz-\mx_m}{|\mz-\mx_m|}\cdot\vx_s(\mx_m)\right)e^{i\omega\vt_n\cdot \mx_m}J_1(\omega|\mz-\mx_m|)\right)\\
  &\times\left(\sum_{m'=1}^{M}\sum_{s''=1}^{2}(\vt_n\cdot\vx_{s''}(\mx_{m'}))\left(\frac{\mz-\mx_{m'}}{|\mz-\mx_{m'}|}\cdot\vx_{s''}(\mx_{m'})\right)e^{-i\omega\vt_n\cdot\mx_{m'}}J_1(\omega|\mz-\mx_{m'}|)\right)\\
  =&\sum_{m=1}^{M}\sum_{n=1}^{N}\left(\sum_{s=1}^{2}(\vt_n\cdot\vx_s(\mx_m))\left(\frac{\mz-\mx_m}{|\mz-\mx_m|}\cdot\vx_s(\mx_m)\right)e^{i\omega\vt_n\cdot \mx_m}J_1(\omega|\mz-\mx_m|)\right)\\
  &\times\left(\sum_{{s''}=1}^{2}(\vt_n\cdot\vx_{s''}(\mx_{m'}))\left(\frac{\mz-\mx_m}{|\mz-\mx_m|}\cdot\vx_{s''}(\mx_m)\right)e^{-i\omega\vt_n\cdot \mx_{m}}J_1(\omega|\mz-\mx_{m}|)\right)\\
  =&2\sum_{m=1}^{M}\sum_{n=1}^{N}\left((\vt_n\cdot\vx_s(\mx_m))^2\left(\frac{\mz-\mx_m}{|\mz-\mx_m|}\cdot\vx_s(\mx_m)\right)^2J_1(\omega|\mz-\mx_m|)^2\right)\\
  =&2N\sum_{m=1}^{M}\sum_{s=1}^{2}\left(\frac{1}{N}\sum_{n=1}^{N}(\vt_n\cdot\vx_s(\mx_m))^2\right)\left(\frac{\mz-\mx_m}{|\mz-\mx_m|}\cdot\vx_s(\mx)\right)^2J_1(\omega|\mz-\mx_m|)^2\\
  =&\frac{N}{\pi}\sum_{m=1}^{M}\sum_{s=1}^{2}\left(\frac{\mz-\mx_m}{|\mz-\mx_m|}\cdot\vx_s(\mx)\right)^2\int_{\mathbb{S}^1}(\vp\cdot\vx_s(\mx_m))^2dS(\vp)J_1(\omega|\mz-\mx_m|)^2.
\end{align*}
Now, we consider polar coordinates; because $\vp,\vx_s\in\mathbb{S}^1$, let $\vp=[\cos\phi,\sin\phi]^T$ and $\vx_s=[\cos\psi,\sin\psi]^T$; then elementary calculus yields
\[\int_{\mathbb{S}^1}(\vp\cdot\vx_s(\mx_m))^2dS(\vp)=\int_0^{2\pi}\cos^2(\phi-\psi)d\phi=\pi.\]
Hence, we can obtain
\begin{equation}\label{term4}
  \sum_{n=1}^{N}\Psi_2\overline{\Psi}_2=N\sum_{m=1}^{M}\left(\frac{\mz-\mx_m}{|\mz-\mx_m|}\cdot\bigg(\mt(\mx_m)+\mn(\mx_m)\bigg)\right)^2J_1(\omega|\mz-\mx_m|)^2.
\end{equation}

Therefore, from (\ref{term3}) and (\ref{term4}), we can obtain
\[|P_{\mathrm{ noise}}(\mf(\mz))|=\sqrt{N}\left(1-\sum_{m=1}^{M}\left(\frac{\mz-\mx_m}{|\mz-\mx_m|}\cdot(\mt(\mx_m)+\mn(\mx_m))\right)^2J_1(\omega|\mz-\mx_m|)^2+o(h)\right)^{1/2}.\]
\end{proof}

Unlike the permittivity contrast case, the map of (\ref{MUSIC2}) shows two curves in the neighborhood of $\gamma$ because
\[\lim_{\mz\to\mx_m}\frac{J_1(\omega|\mz-\mx_m|)^2}{|\mz-\mx_m|}=0,\]
and $J_1(x)^2$ is maximum at two points, $x_1$ and $x_2$, which are symmetric with respect to $x=0$. This is why two ghost replicas with large magnitude and many artifacts with small magnitude appear instead of the true shape of the supporting curve $\gamma$ (see FIG. \ref{PlotMUSIC2}). Some numerical simulation results can be found in FIG. \ref{MapMUSIC2} and in \cite[Section 5]{HSZ}.

Note that $J_1(x)^2\ne1$ for all $x\in\mathbb{R}$. Hence, in contrast to the permittivity contrast case, (\ref{MUSIC2}) does not blow up.

\begin{figure}[!ht]
\begin{center}
\subfigure[Graph of $|1-J_0(\omega|x|)^2|^{-1}$]{\label{PlotMUSIC1}\includegraphics[width=0.49\textwidth]{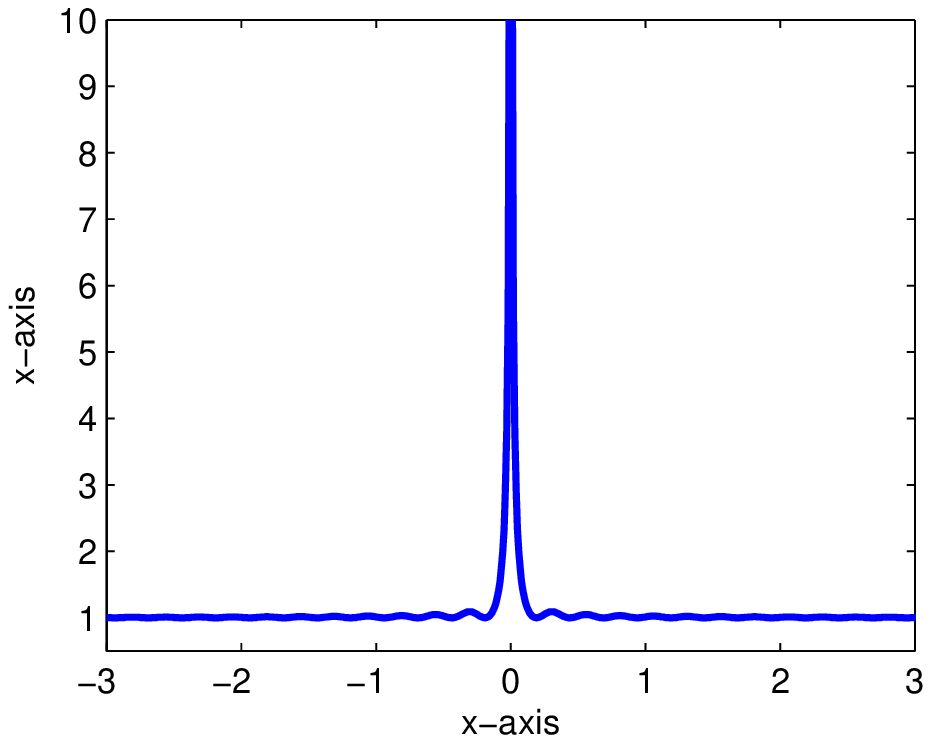}}
\subfigure[Graph of $|1-J_1(\omega|x|)^2|^{-1}$]{\label{PlotMUSIC2}\includegraphics[width=0.49\textwidth]{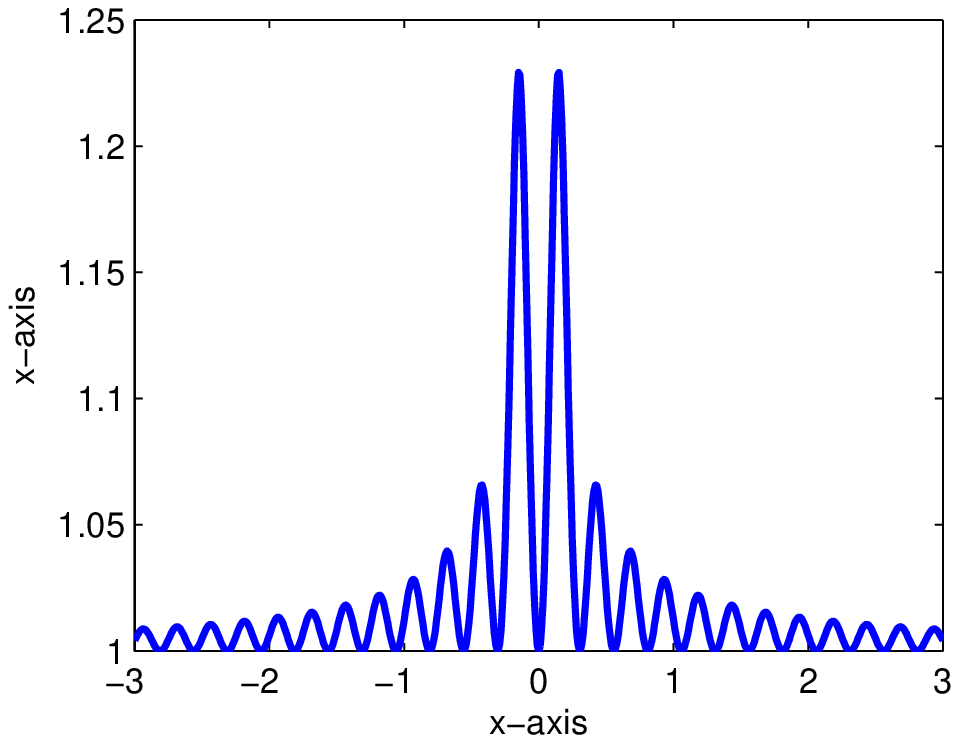}}
\caption{\label{PlotMUSIC}Graphs of $|1-J_p(\omega|x|)^2|^{-1}$, with $p=0$ (left) and $1$ (right), for $\omega=2\pi/0.5$.}
\end{center}
\end{figure}

Finally, by combining (\ref{MUSIC1}) and (\ref{MUSIC2}), we can immediately obtain the following result.

\begin{thm}[Both permittivity and permeability contrast case]\label{Theorem3}
  Let $\eps\ne\eps_0$ and $\mu\ne\mu_0$. Then, for sufficiently large $N$ ($>3M$) and $\omega$, (\ref{MUSICfunction}) can be written as follows:
  \begin{multline}\label{MUSIC3}
    \mathbb{E}_{\eps,\mu}(\mz):=\frac{1}{\sqrt{N}}\left(1-\sum_{m=1}^{M}J_0(\omega|\mz-\mx_m|)^2\right.\\
     \left.+\sum_{m=1}^{M}\bigg(\frac{\mz-\mx_m}{|\mz-\mx_m|}\cdot(\mt(\mx_m)+\mn(\mx_m))\bigg)^2J_1(\omega|\mz-\mx_m|)^2+o(h)\bigg)\right)^{-1/2}.
  \end{multline}
\end{thm}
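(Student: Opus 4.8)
The plan is to exploit the fact that, when both $\eps\neq\eps_0$ and $\mu\neq\mu_0$, the three scalar terms in the integrand of (\ref{StructureofMSRmatrix})---the dielectric term $(\eps-1)$ and the two quadratic forms built from $\mt(\mx_m)$ and $\mn(\mx_m)$---each generate one rank-one contribution per segment $\mx_m$. Hence the range of $\mathbb{A}=\overline{\mathbb{K}}\mathbb{K}$ is now spanned by $3M$ vectors, which is exactly why we require $N>3M$. Reading off the left singular vectors from the proofs of Theorem \ref{Theorem1} and Theorem \ref{Theorem2}, these are, up to $\mO(h)$, the dielectric vector $\frac{1}{\sqrt N}[e^{i\omega\vt_n\cdot\mx_m}]_{n=1}^{N}$ together with the two magnetic vectors $\frac{1}{\sqrt N}[\vt_n\cdot\vx_s(\mx_m)\,e^{i\omega\vt_n\cdot\mx_m}]_{n=1}^{N}$ for $s=1,2$. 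First I would write $P_{\mathrm{noise}}(\mf(\mz))$, with $\mf(\mz)$ as in (\ref{VecF}), as $\mf(\mz)$ minus the orthogonal projection onto the span of all $3M$ of these vectors.

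The decisive step is to show that this whole family is approximately orthonormal for large $\omega$, so that the projection splits into a sum of rank-one projections and the squared residual norm obeys the Pythagorean identity $|P_{\mathrm{noise}}(\mf(\mz))|^2=|\mf(\mz)|^2-\sum(\text{squared projections})$. Orthonormality within the dielectric family and within the magnetic family is already supplied by (\ref{orthogonal1}) and (\ref{orthogonal2}). What is genuinely new here are the cross inner products between a dielectric vector and a magnetic vector. For $m\neq m'$ these reduce, via Theorem \ref{TheoremBessel}, to a multiple of $J_1(\omega|\mx_m-\mx_{m'}|)$, which is negligible for large $\omega$; for $m=m'$ they reduce to $\frac{1}{N}\sum_{n=1}^{N}\vt_n\cdot\vx_s(\mx_m)\approx\frac{1}{2\pi}\int_{\mathbb{S}^1}\vt\cdot\vx_s(\mx_m)\,dS(\vt)$, which vanishes because the integrand is odd on $\mathbb{S}^1$. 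This confirms that the dielectric and magnetic signal subspaces are mutually orthogonal at each location.

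With orthonormality in hand, the assembly is routine and reproduces the two previous computations term by term. Testing $\mf(\mz)$ against the dielectric vector and applying the first identity of Theorem \ref{TheoremBessel} produces the quantity $N\,J_0(\omega|\mz-\mx_m|)^2$, exactly as in the proof of Theorem \ref{Theorem1}; testing against the magnetic vector attached to $\vx_s(\mx_m)$ and applying the second identity produces $N\big(\frac{\mz-\mx_m}{|\mz-\mx_m|}\cdot\vx_s(\mx_m)\big)^2J_1(\omega|\mz-\mx_m|)^2$, whose sum over $s=1,2$ is the $J_1$ contribution of Theorem \ref{Theorem2}. Because the projections are mutually orthogonal, these contributions simply add; collecting the common factor $N=|\mf(\mz)|^2$, absorbing the $\mO(h)$ remainders of the singular vectors into an $o(h)$ error, and taking the reciprocal square root then assembles the bracket of (\ref{MUSIC3}) out of the brackets of (\ref{MUSIC1}) and (\ref{MUSIC2}).

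I expect the main obstacle to be precisely the mutual-orthogonality verification of the second paragraph: it is the only ingredient not already contained verbatim in Theorems \ref{Theorem1} and \ref{Theorem2}, and it is what allows the two deficits to superpose rather than interfere through cross terms. Once the vanishing of the dielectric--magnetic cross inner products is established (together with the $o(h)$ control on the remainders), the statement follows immediately by superposing (\ref{MUSIC1}) and (\ref{MUSIC2}), as claimed.
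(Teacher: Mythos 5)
Your proposal is correct and follows essentially the same route as the paper, which simply asserts that the result is obtained ``by combining'' (\ref{MUSIC1}) and (\ref{MUSIC2}) and supplies no further argument. The one ingredient you add --- verifying that the dielectric and magnetic singular vectors are mutually (approximately) orthogonal, via the vanishing of $\frac{1}{2\pi}\int_{\mathbb{S}^1}\vt\cdot\vx_s(\mx_m)\,dS(\vt)$ for $m=m'$ and the decay of $J_1(\omega|\mx_m-\mx_{m'}|)$ for $m\ne m'$ --- is precisely the justification the paper leaves implicit for why the two deficits superpose without cross terms, and you carry it out correctly.
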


This result shows that plots of (\ref{MUSIC3}) show a large magnitude at $\mz$ if $\mz\ne\mx_m$ and
\[\sum_{m=1}^{M}\left(J_0(\omega|\mz-\mx_m|)^2+\left(\frac{\mz-\mx_m}{|\mz-\mx_m|}\cdot(\mt(\mx_m)+\mn(\mx_m))\right)^2J_1(\omega|\mz-\mx_m|)^2+o(h)\right)=1.\]
Thus, a result with poor resolution will appear; refer to the examples of numerical simulation in \cite[Section 4.4]{PL3}.

\begin{rem}
  On the basis of recent work \cite{P1}, the structure of so-called \textbf{subspace migration} is as follows.
  \begin{enumerate}
    \item Permittivity contrast case:
    \[\mathbb{E}_{\mathrm{SM}}(\mz)=\sum_{m=1}^{M}J_0(\omega|\mz-\mx_m|)^2.\]
    \item Permeability contrast case:
    \[\mathbb{E}_{\mathrm{SM}}(\mz)=\sum_{m=1}^{M}\left\{\bigg(\frac{\mz-\mx_m}{|\mz-\mx_m|}\bigg)\cdot\bigg(\mt(\mx_m)+\mn(\mx_m)\bigg)J_1(\omega|\mz-\mx_m|)\right\}^2.\]
  \end{enumerate}
Hence, we can observe the following relationship between MUSIC and subspace migration, which is derived in \cite[Formula (7.4)]{AGKLS}: Let us select the unit vector $\mf(\mz)$ of (\ref{VecF}) as
\[\mf(\mz)=\frac{1}{\sqrt{N}}\bigg[e^{i\omega\vt_1\cdot\mz},e^{i\omega\vt_2\cdot\mz},\cdots,e^{i\omega\vt_N\cdot\mz}\bigg]^T.\]
Then,
\[\mathbb{E}(\mz)=\bigg(1-\mathbb{E}_{\mathrm{SM}}(\mz)\bigg)^{-1/2}.\]
\end{rem}

\subsection{Imaging of perfectly conducting cracks}
Here, let $\Gamma$ be a smooth curve that describes the crack: for an injective piecewise smooth function $\boldsymbol{\phi}:[-1,1]\longrightarrow\mathbb{R}^2$,
\begin{equation}\label{PCrack}
  \Gamma=\{\boldsymbol{\phi}(x):-1\leq x\leq1\}.
\end{equation}
Let $u(\mx,\vt;\omega)$ be the single-component electric field that satisfies the Helmholtz equation:
\[\Delta u(\mx,\vt;\omega)+\omega^2 u(\mx,\vt;\omega)=0\quad\mbox{in}\quad\mathbb{R}^2\backslash\Gamma.\]
For the sound-soft arc [transverse magnetic (TM) polarization], $u(\mx,\vt;\omega)$ satisfies the Dirichlet boundary condition on $\Gamma$ (see \cite{K}):
\[u(\mx,\vt;\omega)=0\quad\mbox{on}\quad\Gamma\]
and for the sound-hard arc [transverse electric (TE) polarization], $u(\mx,\vt;\omega)$ satisfies the Neumann boundary condition on $\Gamma$ (see \cite{M2}):
\[\frac{\p u(\mx,\vt;\omega)}{\p\mn(\mx)}=0\quad\mbox{on}\quad\Gamma\backslash\{\boldsymbol{\phi}(-1),\boldsymbol{\phi}(1)\},\]
where $\mn(\mx)$ is the unit normal vector to $\Gamma$ at $\mx$.

Then the far-field pattern $u_{\infty}(\vv,\vt;\omega)$ for the scattering of an incident field $u_0(\mx,\vt;\omega)=e^{i\omega\vt\cdot\mx}$ from $\Gamma$ is given by
\[u_{\infty}(\vv,\vt;\omega)=\left\{
  \begin{array}{ll}
    \displaystyle\medskip\frac{1+i}{4\sqrt{\pi\omega}}\int_\Gamma e^{-i\omega\vv\cdot\mx}\varphi(\mx,\vt;\omega)d\mx & \mbox{: sound-soft arc}\\
    \displaystyle\medskip\frac{(1-i)\sqrt{\omega}}{4\sqrt{\pi}}\int_\Gamma \vv\cdot\mn(\mx)e^{-i\omega\vv\cdot\mx}\psi(\mx,\vt;\omega)d\mx & \mbox{: sound-hard arc}.
  \end{array}
  \right.\]

According to the physical factorization in \cite{HSZ,PL1}, if the incident and observation directions are the same, the left singular vector of the MSR matrix is of the form
\[\mU_m=\left\{
\begin{array}{ll}
  \medskip\bigg[e^{i\omega\vt_1\cdot\mx_m},e^{i\omega\vt_2\cdot\mx_m},\cdots,e^{i\omega\vt_N\cdot\mx_m}\bigg]^T & \mbox{: sound-soft arc} \\
  \medskip\bigg[(\vt_1\cdot\mn(\mx_m))e^{i\omega\vt_1\cdot\mx_m},\cdots,(\vt_N\cdot\mn(\mx_m))e^{i\omega\vt_N\cdot\mx_m}\bigg]^T & \mbox{: sound-hard arc}.
\end{array}
\right.\]
Thus, the structure of the left singular vectors for the sound-soft and sound-hard arcs is almost the same as in the permittivity contrast and permeability contrast cases (except for the absence of a unit tangential vector), respectively. Hence, we can obtain the following result.
\begin{thm}\label{Theorem4}
  Let $N$ and $\omega$ be sufficiently large. Then (\ref{MUSICfunction}) can be written as follows.
  \begin{enumerate}
    \item Sound-soft arc (or TM) case:
    \[\mathbb{E}_{\mathrm{TM}}(\mz)\approx\frac{1}{\sqrt{N}}\left(1-\sum_{m=1}^{M}J_0(\omega|\mz-\mx_m|)^2\right)^{-1/2}.\]
    \item Sound-hard arc (or TE) case:
    \[\mathbb{E}_{\mathrm{TE}}(\mz)\approx\frac{1}{\sqrt{N}}\left(1-\sum_{m=1}^{M}\left(\frac{\mz-\mx_m}{|\mz-\mx_m|}\cdot\mn(\mx_m)\right)^2J_1(\omega|\mz-\mx_m|)^2\right)^{-1/2}.\]
  \end{enumerate}
\end{thm}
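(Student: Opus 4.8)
The plan is to obtain both formulas as specializations of the computations already completed for Theorems~\ref{Theorem1} and~\ref{Theorem2}, exploiting the observation recorded just above: the left singular vectors of the MSR matrix for the sound-soft and sound-hard arcs coincide, respectively, with those of the pure permittivity and pure permeability cases, up to the suppression of the tangential direction.

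For the sound-soft (TM) arc the left singular vector is
\[\mU_m\approx\frac{1}{\sqrt{N}}\bigg[e^{i\omega\vt_1\cdot\mx_m},e^{i\omega\vt_2\cdot\mx_m},\cdots,e^{i\omega\vt_N\cdot\mx_m}\bigg]^T+\mO(h),\]
which is structurally identical to the vector used in the proof of Theorem~\ref{Theorem1}. Hence the orthogonality relation~(\ref{orthogonal1}) holds verbatim for large $\omega$, the decomposition of $|P_{\mathrm{noise}}(\mf(\mz))|$ into the terms $\Phi_1,\Phi_2$ is unchanged, and their evaluation in~(\ref{term1}) and~(\ref{term2}) via Lemma~\ref{TheoremBessel} carries over word for word. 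This reproduces $|P_{\mathrm{noise}}(\mf(\mz))|=\sqrt{N}(1-\sum_{m=1}^M J_0(\omega|\mz-\mx_m|)^2)^{1/2}$ and therefore the stated $\mathbb{E}_{\mathrm{TM}}(\mz)$; nothing new is required.

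For the sound-hard (TE) arc I would rerun the proof of Theorem~\ref{Theorem2} after collapsing the inner sum over $s$ to the single normal contribution, i.e.\ taking $\vx_s(\mx_m)=\mn(\mx_m)$ and attaching exactly one singular vector to each point $\mx_m$ instead of two. The cross-point orthogonality needed for the projection, $\mU_m\cdot\overline{\mU}_{m'}\approx0$ for $m\ne m'$, is the special case of~(\ref{orthogonal2}) in which both direction slots are filled by $\mn$, and it follows from the $J_1$ decay for large $\omega$. Forming $|P_{\mathrm{noise}}(\mf(\mz))|^2$ and expanding it exactly as in Theorem~\ref{Theorem2}, the linear cross term is evaluated as in~(\ref{term3}) through the $J_1$ identity of Lemma~\ref{TheoremBessel}, giving $-2N\sum_{m=1}^M(\frac{\mz-\mx_m}{|\mz-\mx_m|}\cdot\mn(\mx_m))^2 J_1(\omega|\mz-\mx_m|)^2$, while the quadratic term is evaluated as in~(\ref{term4}) from the orthogonality relation together with the normalization integral $\int_{\mathbb{S}^1}(\vp\cdot\mn(\mx_m))^2\,dS(\vp)=\pi$, giving $+N\sum_{m=1}^M(\frac{\mz-\mx_m}{|\mz-\mx_m|}\cdot\mn(\mx_m))^2 J_1(\omega|\mz-\mx_m|)^2$. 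Adding the two yields $|P_{\mathrm{noise}}(\mf(\mz))|^2=N(1-\sum_{m=1}^M(\frac{\mz-\mx_m}{|\mz-\mx_m|}\cdot\mn(\mx_m))^2 J_1(\omega|\mz-\mx_m|)^2+o(h))$, which is precisely $\mathbb{E}_{\mathrm{TE}}(\mz)$.

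The step demanding the most care is the bookkeeping of the projection onto the signal subspace in the TE case. Because only one singular vector is now attached to each crack point (rather than the two of Theorem~\ref{Theorem2}), the admissibility condition relaxes to $N>M$, and one must track the normalization convention consistently: under the $1/\sqrt{N}$ prefactor the raw vector has squared norm $\tfrac12$ in the large-$N$ limit, so the factor produced by $\int_{\mathbb{S}^1}(\vp\cdot\mn)^2\,dS=\pi$ must be combined with this prefactor exactly as in the chain of equalities culminating in~(\ref{term4}), so that the coefficients $-2N$ and $+N$ of the linear and quadratic contributions collapse to the single clean expression $1-\sum_m(\frac{\mz-\mx_m}{|\mz-\mx_m|}\cdot\mn(\mx_m))^2 J_1(\omega|\mz-\mx_m|)^2$. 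Once this accounting is fixed, every remaining manipulation is a routine transcription of the computations already carried out for Theorems~\ref{Theorem1} and~\ref{Theorem2}.
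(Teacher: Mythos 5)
Your proposal takes essentially the same route as the paper: the paper gives no separate argument for Theorem \ref{Theorem4} beyond the observation that the left singular vectors for the sound-soft and sound-hard arcs have the same structure as in the pure permittivity and pure permeability cases (with the tangential direction suppressed), and then invokes the computations of Theorems \ref{Theorem1} and \ref{Theorem2}. Your write-up simply carries out that specialization explicitly --- reusing (\ref{orthogonal1}), (\ref{term1}), (\ref{term2}) verbatim for the TM case and repeating (\ref{orthogonal2}), (\ref{term3}), (\ref{term4}) with $\vx_s(\mx_m)=\mn(\mx_m)$ and a single singular vector per point for the TE case --- which is more detail than the paper itself provides and reaches the same stated formulas.
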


\subsection{Imaging of small electromagnetic inclusions}
We briefly consider the use of MUSIC for imaging electromagnetic inclusions $\Sigma_m$, $m=1,2,\cdots,M$, with small diameter $r$:
\[\Sigma_m:=\mx_m+r\mathbf{B}_m,\]
where $\mathbf{B}_m$ is a simply connected smooth domain containing the origin. We assume that $\Sigma_m$ are sufficiently separate from each other, and denote the collection of such inclusions as $\Sigma$.

As in section \ref{sec:2}, we let $u(\mx,\vt;\omega)$ satisfy (\ref{TotalField}) in the presence of $\Sigma$, and $u_0(\mx,\vt;\omega)$ is given by (\ref{IncidentField}). Then, the far-field pattern can be written as the following asymptotic expansion formula (see \cite{AK}):
\begin{equation}\label{smallexpansion}
  u_{\infty}(\vv,\vt;\omega)\approx r^2\frac{\omega^2(1+i)}{4\sqrt{\omega\pi}} \sum_{m=1}^{M}|\mathbf{B}_m|\bigg((-\vv)\cdot\mathbb{A}(\mx_m)\cdot\vt+(\eps-\eps_0)\bigg)e^{i\omega(\vt-\vv)\cdot\mx_m}+o(r^2).
\end{equation}
Here, $\mathbb{A}(\mx_m)$ denotes the polarization tensor corresponding to $\Sigma_m$. Then, we can obtain the following results in a similar manner:
\begin{thm}\label{Theorem5}
  Let $N$ and $\omega$ be sufficiently large. Then (\ref{MUSICfunction}) can be written as follows.
  \begin{enumerate}
    \item Dielectric permittivity contrast case:
    \[\mathbb{E}_{\eps}(\mz)\approx\frac{1}{\sqrt{N}}\left(1-\sum_{m=1}^{M}J_0(\omega|\mz-\mx_m|)^2+o(r^2)\right)^{-1/2}.\]
    \item Magnetic permeability contrast case:
    \[\mathbb{E}_{\mu}(\mz)\approx\frac{1}{\sqrt{N}}\left(1-\sum_{m=1}^{M}\left\{\left(\frac{\mz-\mx_m}{|\mz-\mx_m|}\cdot(\me_1+\me_2)\right)J_1(\omega|\mz-\mx_m|)\right\}^2+o(r^2)\right)^{-1/2},\]
    where $\left\{\me_1=[1,0]^T,\me_2=[0,1]^T\right\}$ denotes an orthonormal basis in $\mathbb{R}^2$.
    \item Both permittivity and permeability contrast case:
    \begin{multline*}
      \mathbb{E}_{\eps,\mu}(\mz)\approx\frac{1}{\sqrt{N}}\left(1-\sum_{m=1}^{M}J_0(\omega|\mz-\mx_m|)^2\right.\\
      \left.+\sum_{m=1}^{M}\left(\frac{\mz-\mx_m}{|\mz-\mx_m|}\cdot(\me_1+\me_2)\right)^2J_1(\omega|\mz-\mx_m|)^2+o(r^2)\right)^{-1/2}.
    \end{multline*}
  \end{enumerate}
\end{thm}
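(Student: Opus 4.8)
The plan is to transfer the entire small-inclusion result from the thin-inclusion analysis of Theorems \ref{Theorem1}, \ref{Theorem2}, and \ref{Theorem3}, because the asymptotic formula (\ref{smallexpansion}) produces a multistatic response matrix with the same algebraic structure as (\ref{StructureofMSRmatrix}). First I would substitute $\vv_j=-\vt_j$ into (\ref{smallexpansion}) to obtain the entries
\[K_{jl}\approx r^2\frac{\omega^2(1+i)}{4\sqrt{\omega\pi}}\sum_{m=1}^{M}|\mB_m|\Big(\vt_j\cdot\mathbb{A}(\mx_m)\cdot\vt_l+(\eps-\eps_0)\Big)e^{i\omega(\vt_j+\vt_l)\cdot\mx_m}+o(r^2),\]
and then diagonalize the symmetric polarization tensor as $\mathbb{A}(\mx_m)=\sum_{s=1}^{2}\alpha_s\me_s\me_s^T$, so that $\vt_j\cdot\mathbb{A}(\mx_m)\cdot\vt_l=\sum_{s=1}^{2}\alpha_s(\vt_j\cdot\me_s)(\vt_l\cdot\me_s)$. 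This exhibits $\mathbb{K}$ as a sum of outer products built from the plane-wave vectors $[e^{i\omega\vt_n\cdot\mx_m}]_n$ (from the $(\eps-\eps_0)$ term) and the weighted vectors $[(\vt_n\cdot\me_s)e^{i\omega\vt_n\cdot\mx_m}]_n$ (from the tensor term), which is exactly the role played by $\mt(\mx_m)$ and $\mn(\mx_m)$ in the thin case. In the isotropic situation the eigenvectors may be taken as the fixed coordinate basis $\me_1,\me_2$, as stated.

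For the pure permittivity case I would drop the tensor contribution and observe that the left singular vectors reduce to $\mU_m\approx\frac{1}{\sqrt{N}}[e^{i\omega\vt_1\cdot\mx_m},\ldots,e^{i\omega\vt_N\cdot\mx_m}]^T+\mO(r^2)$, identical in form to Theorem \ref{Theorem1} with $o(h)$ replaced by $o(r^2)$. The orthogonality relation (\ref{orthogonal1}) and the first identity of Lemma \ref{TheoremBessel} then reproduce verbatim the evaluation of the two sums appearing there, yielding $|P_{\mathrm{noise}}(\mf(\mz))|=\sqrt{N}(1-\sum_{m}J_0(\omega|\mz-\mx_m|)^2+o(r^2))^{1/2}$ and hence claim (1).

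For the pure permeability case I would take the left singular vectors in the form $\mU_{2(m-1)+s}\approx\frac{1}{\sqrt{N}}[(\vt_n\cdot\me_s)e^{i\omega\vt_n\cdot\mx_m}]_n^T+\mO(r^2)$, the analogue of the vectors in Theorem \ref{Theorem2} with $\vx_s(\mx_m)$ replaced by $\me_s$. The orthogonality estimate (\ref{orthogonal2}) and the second identity of Lemma \ref{TheoremBessel} go through unchanged, and the rotation-invariant integral $\int_{\mathbb{S}^1}(\vp\cdot\me_s)^2dS(\vp)=\pi$ holds for any unit vector $\me_s$; repeating the evaluation of $\sum_{n}(\Psi_1-\overline{\Psi}_1)$ and $\sum_{n}\Psi_2\overline{\Psi}_2$ then gives claim (2) with $\mt(\mx_m)+\mn(\mx_m)$ replaced by $\me_1+\me_2$. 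Claim (3) follows by superposing the two expansions exactly as in Theorem \ref{Theorem3}, the range of $\overline{\mathbb{K}}\mathbb{K}$ now being spanned by the $3M$ vectors $\{\mf(\mx_m),\,[(\vt_n\cdot\me_1)e^{i\omega\vt_n\cdot\mx_m}]_n,\,[(\vt_n\cdot\me_2)e^{i\omega\vt_n\cdot\mx_m}]_n\}$, whence the hypothesis $N>3M$.

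The main obstacle is not any single computation — these are mechanical once Theorems \ref{Theorem1}--\ref{Theorem3} are in hand — but the rigorous justification that the left singular vectors genuinely take the asserted forms. This requires showing that the rank-one and rank-two pieces produced by the spectral decomposition of $\mathbb{A}(\mx_m)$ are asymptotically orthonormal after the $\frac{1}{\sqrt{N}}$ normalization, so that they coincide up to $\mO(r^2)$ with the true singular vectors of $\mathbb{K}$, and that the well-separatedness of the $\Sigma_m$ together with the rank condition $N>3M$ forces $\mathrm{Range}(\overline{\mathbb{K}}\mathbb{K})$ to be exactly their span. Care is also needed to confirm that the polarization-tensor eigenbasis may be identified with the fixed orthonormal basis $\{\me_1,\me_2\}$: this holds when $\mathbb{A}(\mx_m)$ is isotropic (for instance $\mB_m$ a disk), while for an anisotropic $\mB_m$ the same argument applies with $\me_1,\me_2$ reinterpreted as the shape-dependent eigenvectors, the integral identity being unaffected by this choice.
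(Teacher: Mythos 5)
Your proposal is correct and follows exactly the route the paper intends: the paper offers no written proof for Theorem \ref{Theorem5} beyond the remark that the results follow ``in a similar manner'' from the expansion (\ref{smallexpansion}) and the arguments of Theorems \ref{Theorem1}--\ref{Theorem3}, and your transfer of those arguments with $\mt(\mx_m),\mn(\mx_m)$ replaced by the eigenvectors of the polarization tensor is precisely that. Your added caveat about identifying the eigenbasis of $\mathbb{A}(\mx_m)$ with $\{\me_1,\me_2\}$ only in the isotropic case is a legitimate point the paper glosses over, but it does not change the argument.
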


\begin{rem}
  If $\Sigma_m$ denotes a perfectly conducting inclusion with a small diameter, the asymptotic expansion formulas for the TM and TE cases are similar to (\ref{smallexpansion}); refer to \cite[Theorem 3.1]{G}. Hence, the structure of (\ref{MUSICfunction}) will be the same as in Theorem \ref{Theorem5}.
\end{rem}

\section{Numerical examples}\label{sec:4}
In this section, we present some numerical simulation results. For this, we choose a thin inclusion $\Gamma_1=\{\mx+\eta\mn(x):\mx\in\gamma_1,~\eta\in(-h,h)\}$ with a smooth supporting curve:
\[\gamma_1=\left\{\left[x+0.2,x^3+x^2-0.3\right]^T:-0.5\leq x\leq0.5\right\}.\]
The thickness $h$ of the thin inclusion $\Gamma_1$ is set to $0.02$, and the following parameters are chosen: $\eps_0=\mu_0=1$, and $\eps=\mu=5$. For the illumination and observation directions $\vt_n$, they are chosen as
\[\vt_n=\left[\cos\frac{2n\pi}{N},\sin\frac{2n\pi}{N}\right]^T\]
for $n=1,2,\cdots,N$. The total number of directions is $N=24$, and the applied frequency is $\omega=2\pi/\lambda$ with a wavelength of $\lambda=0.4$. The data set for the MSR matrix $\mathbb{K}$ in (\ref{MSR}) is collected by solving the forward problem introduced in \cite{NK}.

FIG. \ref{MapMUSIC} shows maps of $\mathbb{E}_{\eps}(\mz)$ and $\mathbb{E}_{\mu}(\mz)$ in the presence of $\Gamma_1$. This result demonstrates that the MUSIC algorithm offers a very accurate result for the permittivity contrast case. For the permeability contrast case, as we saw in Theorem \ref{Theorem2}, we cannot recognize the true shape of $\Gamma_1$. However, using Theorem \ref{Theorem2}, we can obtain an approximate shape of $\Gamma_1$ from the two identified curves.

\begin{figure}[!ht]
  \begin{center}
  \subfigure[Permittivity contrast case]{\label{MapMUSIC1}\includegraphics[width=0.49\textwidth]{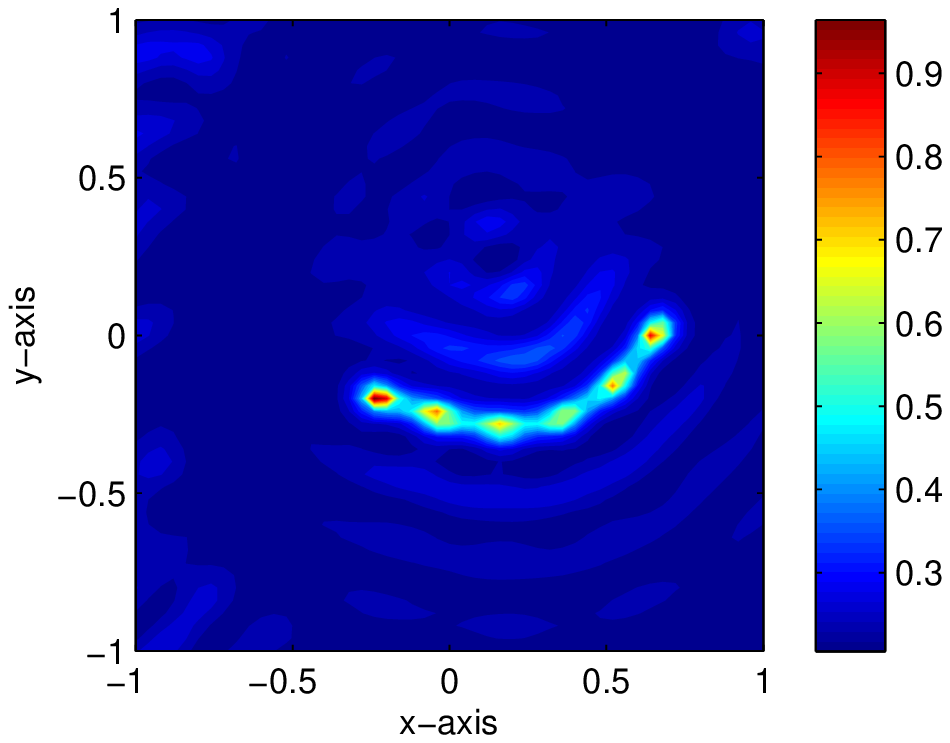}}
  \subfigure[Permeability contrast case]{\label{MapMUSIC2}\includegraphics[width=0.49\textwidth]{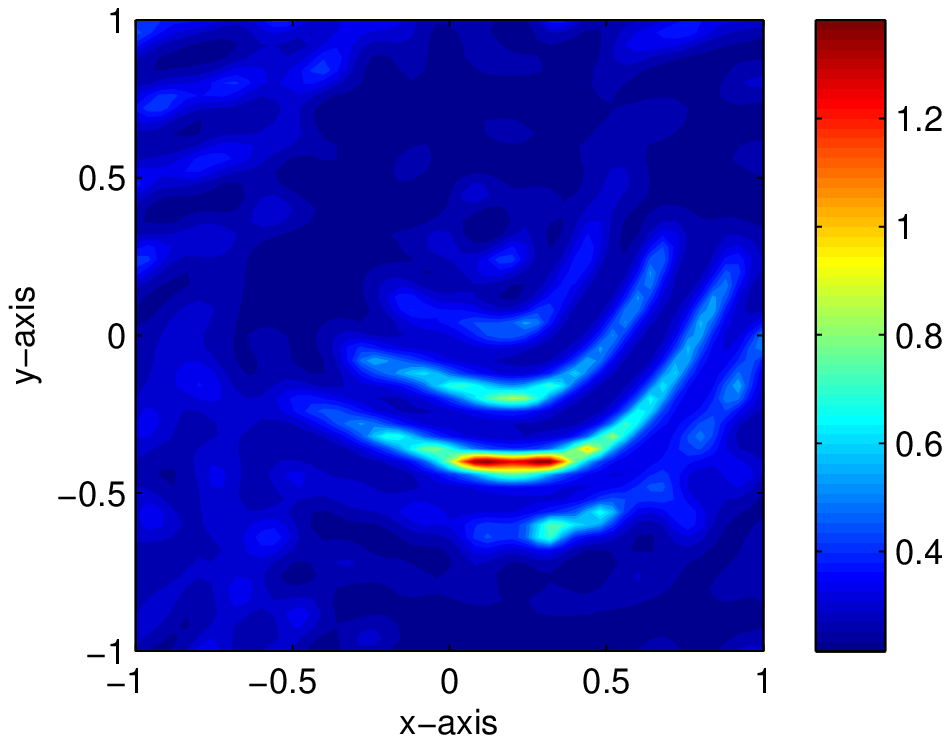}}
  \caption{\label{MapMUSIC}Shape reconstruction of thin electromagnetic inclusion $\Gamma_1$ using MUSIC algorithm.}
  \end{center}
\end{figure}

For a numerical example of a perfectly conducting crack, we selected the following smooth curve in the form of (\ref{PCrack}):
\[\Gamma_2=\left\{\left[x,\frac{1}{2}\cos\frac{x\pi}{2}+\frac{1}{5}\sin\frac{x\pi}{2}-\frac{1}{10}\cos\frac{3x\pi}{2}\right]^T:-1\leq x\leq1\right\}.\]
The data set for the MSR matrix $\mathbb{K}$ in (\ref{MSR}) is collected by solving the forward problems introduced in \cite[Chapter 3]{N} and \cite[Chapter 4]{N} for the sound-soft and sound-hard arcs, respectively. FIG. \ref{MapMUSICPC} shows maps of $\mathbb{E}(\mz)$ for $N=40$ directions and $\lambda=0.4$. By comparing the results in FIG. \ref{MapMUSIC}, we can observe that Theorems \ref{Theorem1} and \ref{Theorem2} hold for the sound-soft and sound-hard arcs, respectively. Additional numerical results can be found in recent works \cite{PL1,PL3}.

\begin{figure}[!ht]
  \begin{center}
  \subfigure[Sound-soft case]{\label{MapMUSIC4}\includegraphics[width=0.49\textwidth]{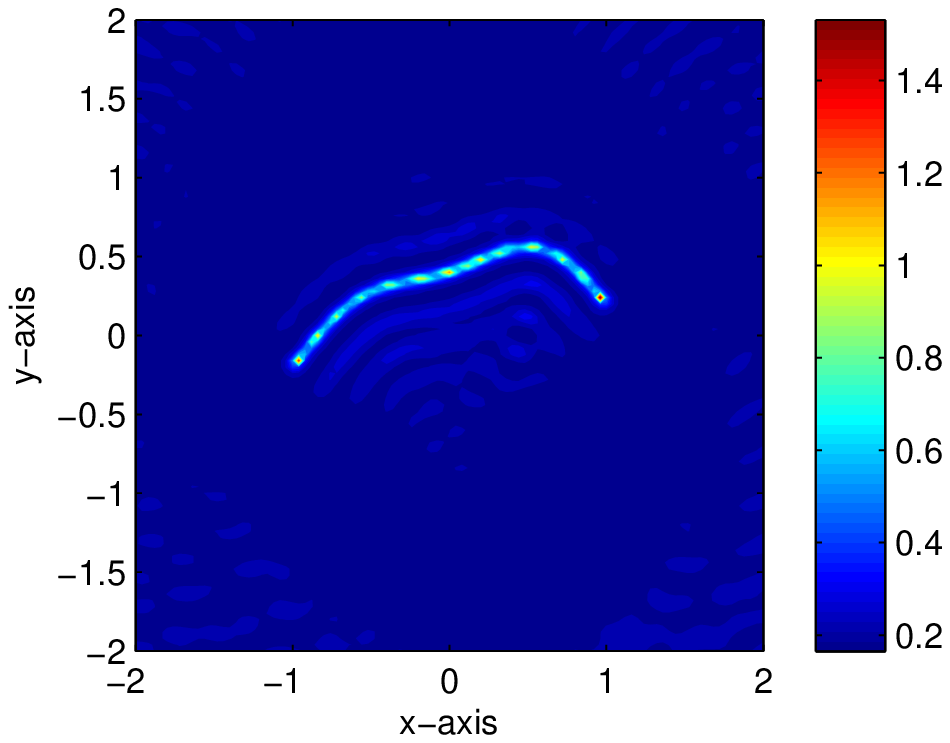}}
  \subfigure[Sound-hard case]{\label{MapMUSIC5}\includegraphics[width=0.49\textwidth]{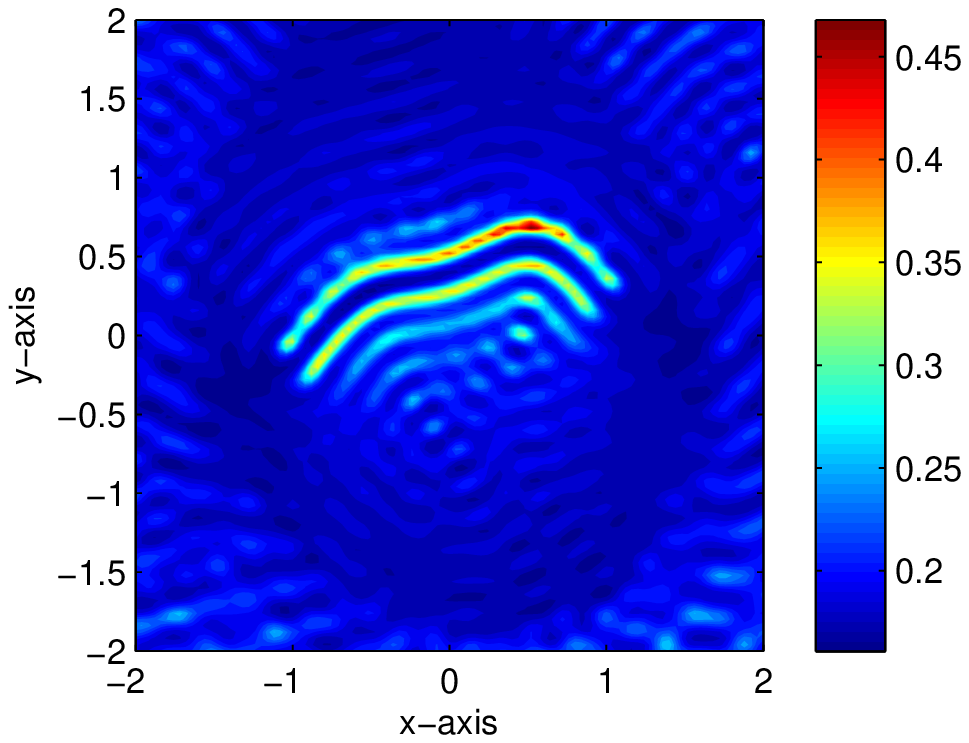}}
  \caption{\label{MapMUSICPC}Shape reconstruction of perfectly conducting crack $\Gamma_2$ using MUSIC algorithm.}
  \end{center}
\end{figure}

\section{Concluding remarks}\label{sec:5}
On the basis of the structure of the left singular vectors of the MSR matrix, we investigated the structure of the MUSIC-type imaging function by establishing a relationship between it and the Bessel function of integer order of the first kind. Using this relationship, we examined certain properties of the MUSIC algorithm.

It is worth emphasizing that the MUSIC algorithm can be applied in limited-view inverse scattering problems. However, its structure has been identified for the sound-soft arc of small length \cite{JKP}. Hence, exploring the structure of MUSIC for the extended, sound-hard arc will be a future work.

Finally, we have been considering the imaging of two-dimensional thin electromagnetic inclusions or perfectly conducting cracks. The analysis could be extended to a three-dimensional problem; refer to \cite{AILP} for related work.

\end{document}